\newtheorem{theorem}{Theorem}
\renewcommand{\tagform@}[1]{\maketag@@@{\quad(#1)}}
\begin{document}

\title{CoMoE: Collaborative Optimization of Expert Aggregation and Offloading for MoE-based LLMs at Edge}

\author{
\IEEEauthorblockN{
Muqing Li\IEEEauthorrefmark{1},
Ning Li\IEEEauthorrefmark{1},~\IEEEmembership{Member,~IEEE},
Xin Yuan\IEEEauthorrefmark{1},
Wenchao Xu\IEEEauthorrefmark{2},
Quan Chen\IEEEauthorrefmark{3},
\\Song Guo\IEEEauthorrefmark{2},~\IEEEmembership{Fellow,~IEEE},
Haijun Zhang\IEEEauthorrefmark{4},~\IEEEmembership{Fellow,~IEEE}
}
\\
\IEEEauthorblockA{
\IEEEauthorrefmark{1}Harbin Institute of Technology 
\IEEEauthorrefmark{2}Hong Kong University of Science and Technology \\
\IEEEauthorrefmark{3}Guangdong University of Technology 
\IEEEauthorrefmark{4}University of Science and Technology Beijing
}

\IEEEauthorblockA{
\footnotesize 23s130418@stu.hit.edu.cn, li.ning@upm.es, xin.yuan@hit.edu.cn, \\
\hspace{1 cm} wenchaoxu@ust.hk, quanchen89@126.com, songguo@ust.hk, zhanghaijun@ustb.edu.cn
}
}

\markboth{Journal of \LaTeX\ Class Files,~Vol.~14, No.~8, August~2021}%
{Shell \MakeLowercase{\textit{et al.}}: A Sample Article Using IEEEtran.cls for IEEE Journals}

\maketitle

\begin{abstract}
The proliferation of large language models (LLMs) has driven the adoption of Mixture-of-Experts (MoE) architectures as a promising solution to scale model capacity while controlling computational costs. However, deploying MoE models in resource-constrained mobile edge computing environments presents significant challenges due to their large memory footprint and dynamic expert activation patterns. To address these challenges, we propose a novel dynamic resource-aware collaborative optimization framework that jointly optimizes expert aggregation granularity and offloading strategies based on real-time device resource states, network conditions, and input characteristics in mobile edge environments, denoted as CoMoE. In CoMoE, we first systematically analyze existing expert aggregation techniques, including expert parameter merging, knowledge distillation, and parameter sharing decomposition, identifying their limitations in dynamic mobile environments. We then investigate expert offloading strategies encompassing expert prediction and prefetching, expert caching and scheduling, and multi-tier storage architectures, revealing the interdependencies between routing decisions and offloading performance. The CoMoE incorporates adaptive scheduling mechanisms that respond to user mobility and varying network conditions, enabling efficient MoE deployment across heterogeneous edge devices. Extensive experiments on real mobile edge testbeds demonstrate that CoMoE achieves approximately 70\% reduction in memory usage compared to baseline methods, 10.5\% lower inference latency than existing expert offloading techniques, while maintaining model performance stability. For large-scale MoE models (e.g., 7.4B-parameter Switch-Base-128), the CoMoE reduces memory requirements from 15.6GB to 4.7GB, enabling deployment on resource-constrained mobile edge devices that previously could only support much smaller models.
\end{abstract}

\begin{IEEEkeywords}
Edge intelligence, Large language models, Mixture of Experts, Expert aggregation, Expert offloading
\end{IEEEkeywords}

\section{Introduction}
With the rapid advancement of artificial intelligence technology, Large Language Models (LLMs) have demonstrated unprecedented capabilities in natural language processing, computer vision, and other domains. However, as model scales continue to expand, computational efficiency and memory constraints have become critical challenges in practical model deployment. The Mixture of Experts (MoE) architecture emerges as a promising solution that effectively scales the model capacity while controlling computational costs through sparse activation mechanisms. Unlike traditional dense models, MoE models selectively activate only a subset of experts for each input token, thereby achieving significant computational savings during inference.

Despite the remarkable success of MoE models in cloud environments, their deployment in edge network environments still faces multiple fundamental challenges. First, there exists a severe mismatch between the high resource demands of the models and the limited resources available in edge networks. Although MoE models improve computational efficiency through sparse activation, they require to store all expert parameters, which makes their total parameter count typically large. This characteristic of ``storage redundancy, but computational efficiency'' becomes a major bottleneck under the limited storage resources of edge environments. Second, expert scheduling conflicts with communication overhead. The inference process of MoE models involves the selection and activation of dynamic experts. In edge environments, experts typically need to be distributed across different storage hierarchies or even different devices, leading to frequent loading and unloading of expert parameters during inference. This introduces significant communication latency in bandwidth-constrained edge networks, where expert loading and communication overhead can contribute to 40\%-60\% of total inference latency. Furthermore, the high dynamism and heterogeneity of edge network environments further exacerbate deployment difficulties. The computational capacity, storage capacity, and network bandwidth of devices are not only limited but also exhibit significant volatility, requiring MoE model deployment schemes to adapt to resource changes in real-time.

To address these challenges, existing research primarily explores two directions: expert aggregation and expert offloading. Expert aggregation techniques reduce computational or storage overhead by combining multiple expert parameters or outputs, including expert parameter merging (such as MEO~\cite{meo}), knowledge distillation (such as EEP~\cite{eep}), and parameter sharing and decomposition (such as WideNet~\cite{widenet}, MPoE~\cite{mpoe}). Expert offloading strategies address memory bottlenecks by transferring inactive experts from GPU to host memory or external storage, including expert prediction and prefetching based on inter-layer correlations (such as MixtralOffloading~\cite{mixtral}, Pre-gated MoE~\cite{pregated}, EdgeMoE~\cite{edgemoe}), methods based on historical inference data (such as MoE-Infinity~\cite{moeinfinity}, ProMoE~\cite{promoe}), model-based prediction methods (such as DyNN-Offload~\cite{dynnoffload}, SiDA~\cite{sida}), and expert caching and scheduling strategies based on access frequency and importance (such as SwapMoE~\cite{swapmoe}, AdapMoE~\cite{adapmoe}, HOBBIT~\cite{hobbit}, CacheMoE~\cite{cachemoe}). These strategies successfully improve the adaptability of large MoE models to resource-constrained edge environments, significantly improving their performance in edge networks. However, despite these advances, existing solutions exhibit several significant limitations: 1) The Limitations of Single Optimization: Most current research tends to consider expert aggregation or offloading strategies separately, lacking a systematic study of their collaborative optimization. This kind of strategy often leads to suboptimal overall performance. 2) The effectiveness of Adaptability on dynamic Strategies: The majority of existing methods are designed for relatively stable resource environments and generally fail to account for the dynamic heterogeneity of edge network resource states and varying application requirements. This disadvantage affects their effectiveness in dynamic and heterogeneity edge deployments. 3) The tight coupling of Routing-Offloading Interdependencies: the existing strategies fail to analyze and optimize the interdependent relationships between routing decisions (which experts to activate) and offloading strategies (where to store experts) in depth. However, overlooking this interdependency can lead to inefficiencies and serious performance degradation.

For addressing the above issues, this paper proposes a novel collaborative optimization framework, which is designed to enhance the efficiency of Mixture-of-Experts (MoE) model inference in resource-constrained and dynamic edge environments, denoted CoMoE. The main advantage of this approach can be summarized: this approach integrates dynamic resource perception, adaptive expert aggregation, and intelligent expert offloading strategies, and addresses the inherent challenges of resource dynamic, heterogeneity, and the mismatch between MoE model demands and edge device capabilities. Specifically, we formalize the problem as a multi-objective optimization, decompose it into manageable subproblems, and propose a dynamic resource perception module for real-time monitoring and prediction. Furthermore, we develop a resource-aware expert aggregation method, including fixed-ratio and adaptive-ratio fusion, and an expert offloading strategy with multilevel storage and activation prediction. This comprehensive strategy, with its two-level resource adaptation mechanism, aims to balance computational performance and resource adaptability, particularly optimizing for Encoder-Decoder architectures.

\subsection{Contributions}

To address the aforementioned limitations and advance the state of the art in MoE model deployment on edge networks, this paper makes the following \textit{novel and significant} contributions.

\begin{itemize}
\item \textbf{First Unified Collaborative Optimization Framework for Expert Aggregation and Offloading}: Unlike existing approaches that treat expert aggregation and offloading as separate optimization problems, we pioneer the first systematic framework that jointly optimizes both strategies. Our key innovation lies in identifying and mathematically modeling the interdependencies between routing decisions, expert aggregation granularity, and offloading performance. This collaborative approach achieves up to 25\% better resource utilization compared to independent optimization strategies, fundamentally shifting the paradigm from isolated optimization to holistic system level efficiency.

\item \textbf{Novel Dynamic Resource Aware Adaptation Mechanism with Two-Level Granularity}: We introduce an innovative dual-granularity adaptation mechanism that responds to edge resource dynamics on both coarse-grained (seconds to minutes) and fine-grained (milliseconds) time scales. The coarse grained level employs precomputed model variants with different aggregation ratios, while the fine grained level performs real-time expert scheduling. This hierarchical design is the first to address resource volatility across multiple temporal dimensions, enabling 40-60\% reduction in expert loading overhead under dynamic conditions compared to static approaches.

\item \textbf{Architecture Specific Encoder Decoder Optimization Strategy}: We propose the first specialized optimization approach that differentiates between Encoder and Decoder components in MoE architectures. Our key insight is that Encoders benefit from aggressive aggregation and offloading due to their parallel processing nature, while Decoders require minimal intervention due to their sequential auto-regressive generation pattern. This architecture-aware strategy achieves a 35\% improvement in memory trade-offs compared to uniform optimization approaches.

\end{itemize}

\textbf{Collective Impact}: These innovations collectively enable, for the first time, the deployment of large-scale MoE models (up to 15.8B parameters) on resource-constrained edge devices with memory footprints reduced by 70\% while maintaining near-optimal performance. Our comprehensive solution fundamentally transforms the feasibility landscape of edge-based large language model deployment, opening new possibilities for democratized AI applications in mobile and IoT environments.

\section{Problem Formulation}

This section formally defines the system model, analyzes the inherent challenges, and presents the mathematical formulation of the optimization problem addressed in this work.

\subsubsection{System Model}

We consider a distributed inference scenario where Mixture-of-Experts (MoE) based large language models are deployed across a collection of edge devices. Let $\mathcal{D} = \{d_1, d_2, \ldots, d_n\}$ denote the set of $n$ edge devices. Each device $d_i$ is characterized by dynamic and heterogeneous resource constraints that are critical for efficient MoE model operation:

\begin{itemize}
\item \textbf{Computational Resources:} $C_i(t) = \{c_i^{\text{gpu}}(t), c_i^{\text{cpu}}(t)\}$ represents the computational capabilities available (e.g., floating-point operations per second, FLOPS) of the GPU and CPU on device $d_i$ at time $t$.

\item \textbf{Memory Resources:} $M_i(t) = \{m_i^{\text{gpu}}(t), m_i^{\text{cpu}}(t)\}$ denotes the available memory capacity (e.g., in gigabytes) of the GPU and CPU on the device $d_i$ at time $t$.

\item \textbf{Bandwidth Resources:} $B_i(t) = \{b_i^{\text{gpu-cpu}}(t), b_i^{\text{net}}(t)\}$ represents the internal bandwidth between the GPU and the CPU (for example, PCIe bandwidth) and the network bandwidth (for example, Wi-Fi or cellular) of the device $d_i$ at time $t$.
\end{itemize}

The MoE model, denoted as $\mathcal{M}$, consists of $L$ MoE layers, each containing $E$ distinct experts. The complete set of experts within the model is defined as $\mathcal{E}_{\mathcal{M}} = \{e_{l,j} \mid l \in [1, L], j \in [1, E]\}$, where $e_{l,j}$ refers to the $j$-th expert in the $l$-th layer. Each expert $e_{l,j}$ has a specific parameter size, denoted $s_{l,j}$.

During inference, for a given input sequence $X = \{x_1, x_2, \ldots, x_T\}$, each input token $x_t$ traverses the MoE layers. In the $l$-th layer, a router mechanism selects a subset of experts $A(x_t, l) = \{e_{l,j_1}, e_{l,j_2}, \ldots, e_{l,j_K}\}$ for processing. Here, $K$ represents the number of experts activated, which is typically much smaller than the total number of experts in the layer ($K \ll E$).

\subsubsection{Challenges in Edge MoE Deployment}

Deploying large-scale MoE models in dynamic edge environments presents several significant challenges that necessitate a robust optimization framework.

\begin{itemize}
\item \textbf{Resource Mismatch:} The cumulative parameter size of MoE models, represented by $\sum_{l=1}^{L} \sum_{j=1}^{E} s_{l,j}$, often substantially exceeds the available GPU memory ($m_i^{\text{gpu}}(t)$) on typical edge devices. This fundamental mismatch prevents the direct deployment of the complete model on a single-edge device.

\item \textbf{Resource Dynamism:} The resources of the edge device are inherently dynamic, with resource states fluctuating over time. Mathematically, this implies that the partial derivatives of computational, memory, and bandwidth resources with respect to time are non-zero (i.e., $\frac{\partial C_i(t)}{\partial t} \neq 0$, $\frac{\partial M_i(t)}{\partial t} \neq 0$, $\frac{\partial B_i(t)}{\partial t} \neq 0$). Such volatility renders static deployment strategies ineffective and necessitates adaptive solutions.

\item \textbf{Resource Heterogeneity:} Edge environments exhibit significant variation in resource configurations among different devices. This heterogeneity, expressed as $C_i(t) \neq C_j(t)$, $M_i(t) \neq M_j(t)$, and $B_i(t) \neq B_j(t)$ for any $i \neq j$, complicates the design of a unified and universally optimal deployment strategy.

\item \textbf{Expert Loading Overhead:} Frequent transfer of expert parameters between different memory hierarchies (e.g., from CPU memory to GPU memory) introduces considerable communication overhead. This overhead is particularly pronounced when an expert $e_{l,j}$ is not resident in GPU memory ($e_{l,j} \notin \mathcal{M}^{\text{gpu}}$), requiring additional loading time that can significantly impact inference latency.
\end{itemize}

\subsubsection{Problem Formalization}

Given the system model and challenges described above, we formalize the efficient deployment of MoE models in edge environments as a multi-objective optimization problem. The primary objective is to minimize end-to-end inference latency while strictly adhering to resource constraints and ensuring that model performance remains above a predefined threshold.

The optimization problem can be formally expressed as
\begin{align}
\min_{\mathcal{P},\mathcal{S}} &\quad L_{\text{total}}(X, \mathcal{M}, \mathcal{P}, \mathcal{S}, \mathcal{D}) \label{eq:optimization_problem}\\
\text{s.t.} &\quad P(\mathcal{M}, \mathcal{P}) \geq P_{\text{threshold}} \label{eq:performance_constraint}\\
&\quad M_{\text{usage}}(\mathcal{M}, \mathcal{P}, \mathcal{S}, d_i) \leq M_i(t), \quad \forall d_i \in \mathcal{D}, \forall t \label{eq:memory_constraint}
\end{align}

\noindent Where:
\begin{itemize}
\item $\mathcal{P}$ denotes the expert aggregation strategy, which dictates how experts are grouped and merged to reduce the footprint of the model.
\item $\mathcal{S}$ represents the expert offloading strategy, governing the distribution and dynamic scheduling of experts across various storage hierarchies (e.g., GPU memory, CPU memory).
\item $L_{\text{total}}$ represents the total end-to-end inference latency, covering both computational and communication delays.
\item $P(\mathcal{M}, \mathcal{P})$ is the performance metric of the aggregated model, such as accuracy or F1 score.
\item $P_{\text{threshold}}$ defines the minimum acceptable performance level for the model.
\item $M_{\text{usage}}$ quantifies the memory consumption of the MoE model, which must not exceed the available memory $M_i(t)$ on any device $d_i$ at any given time $t$.
\end{itemize}

The total end-to-end inference latency, $L_{\text{total}}$, can be further decomposed into computational and communication components:
\begin{align}
L_{\text{total}}(X, \mathcal{M}, \mathcal{P}, \mathcal{S}, \mathcal{D}) &= \sum_{t=1}^{T} \sum_{l=1}^{L} \Big[L_{\text{comp}}(x_t, l, \mathcal{M}, \mathcal{P}) \nonumber \\
&\quad + L_{\text{comm}}(x_t, l, \mathcal{M}, \mathcal{P}, \mathcal{S}, \mathcal{D})\Big] \label{eq:total_latency}
\end{align}

Here, $L_{\text{comp}}$ represents the computation latency incurred during the forward pass of an MoE layer, and $L_{\text{comm}}$ denotes the communication latency primarily attributed to expert parameter loading.

The communication latency $L_{\text{comm}}$ for a given token $x_t$ at layer $l$ can be specifically modeled as:
\begin{multline}
L_{\text{comm}}(x_t, l, \mathcal{M}, \mathcal{P}, \mathcal{S}, \mathcal{D}) = \sum_{e_{l,j} \in A(x_t, l)} \mathbb{I}(e_{l,j} \notin \mathcal{M}^{\text{gpu}}) \\
\cdot \frac{s_{l,j}}{b_i^{\text{gpu-cpu}}(t)} 
\label{eq:comm_latency}
\end{multline}

where $\mathbb{I}(e_{l,j} \notin \mathcal{M}^{\text{gpu}})$ is an indicator function that evaluates to 1 if expert $e_{l,j}$ is not present in the GPU memory and 0 otherwise. The term $s_{l,j} / b_i^{\text{gpu-cpu}}(t)$ represents the time required to transfer the expert parameters $e_{l,j}$ from the CPU to GPU memory.

Given the NP-hard nature of this multi-objective optimization problem, direct analytical solutions are impractical. We therefore propose a decomposition approach that breaks down the problem into three interconnected subproblems:

\begin{enumerate}
\item \textbf{Resource-Awareness Problem:} How can we accurately perceive and predict the dynamic and heterogeneous resource states of edge devices?
\item \textbf{Expert Aggregation Problem:} How can we dynamically adjust the granularity of expert aggregation based on perceived resource status to effectively reduce the model's parameter footprint while preserving performance?
\item \textbf{Expert Offloading Problem:} How can we optimize the distribution and scheduling of experts across different storage hierarchies by leveraging predicted expert activation patterns and current resource states?
\end{enumerate}

\section{CoMoE}

This section details the proposed collaborative optimization framework for expert aggregation and offloading, specifically designed for resource-constrained edge environments. The framework dynamically adjusts the granularity of expert aggregation and offloading strategies by comprehensively considering the heterogeneous and time-varying resource constraints of edge devices. This adaptive approach aims to achieve efficient inference for Mixture-of-Experts (MoE) models. The overall architecture of the proposed framework is illustrated in Figure~\ref{fig:overall_architecture}.

\begin{figure*}[!t]
\centering
\includegraphics[width=\textwidth]{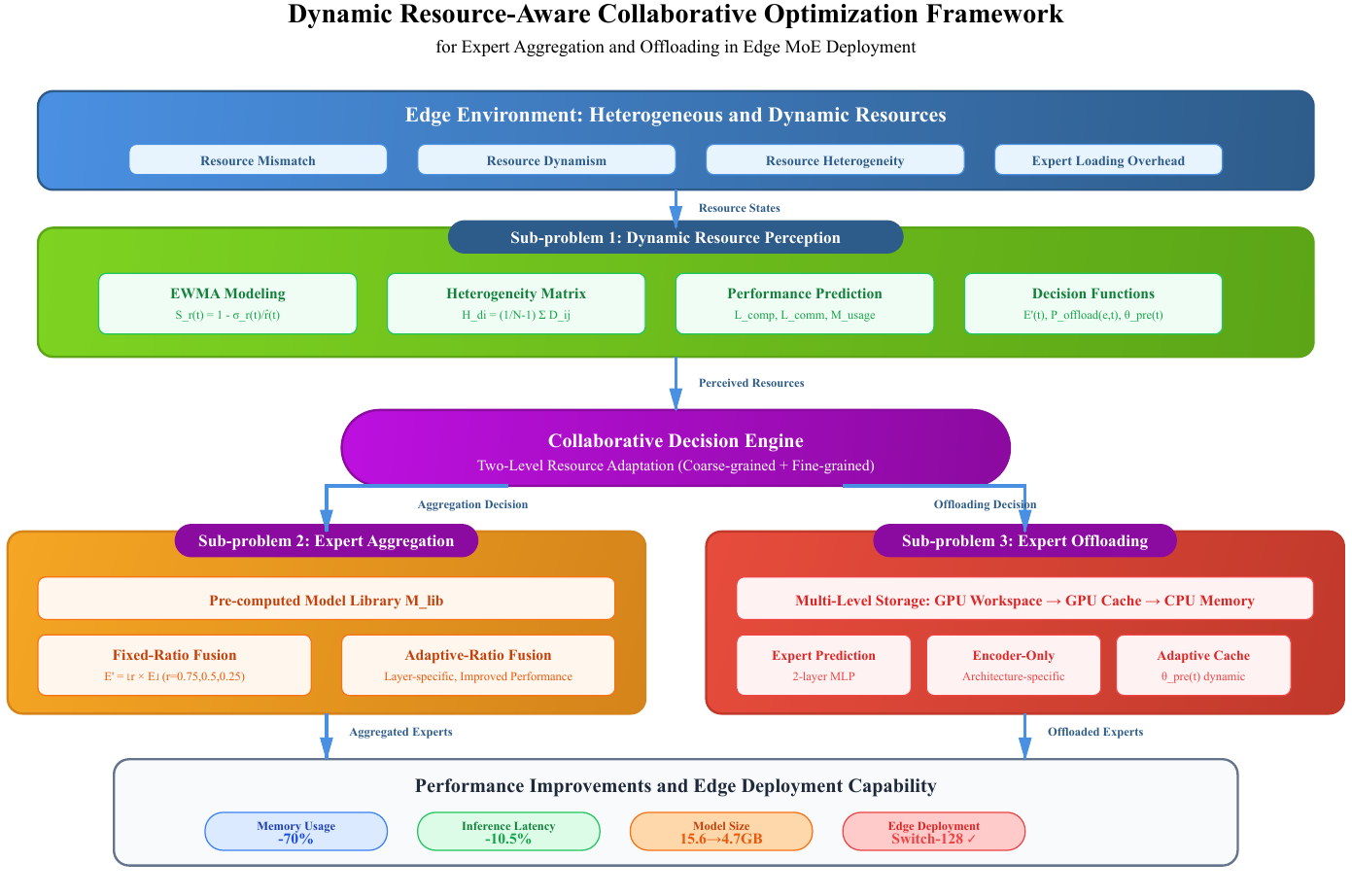}
\caption{Overall Architecture of the Proposed Collaborative Optimization Framework}
\label{fig:overall_architecture}
\end{figure*}

\subsection{Dynamic Heterogeneous Resource Perception and Modeling}

To effectively manage the highly dynamic and heterogeneous nature of resources in edge networks, we introduce a dynamic resource perception module. This module is designed for realtime monitoring and modeling of the computational capabilities, memory capacity, and network bandwidth of edge devices.

\subsubsection{Edge Resource State Characterization}

The resource state of an edge device $d_i$ at time $t$ is represented as a multidimensional vector $R_i(t)$, which inherently changes over time:
\begin{equation}
R_i(t) = \{C_i(t), M_i(t), B_i(t)\}
\end{equation}

where $C_i(t)$ denotes the computational resource state, $M_i(t)$ represents the memory resource state, and $B_i(t)$ represents the bandwidth resource state. These components are further defined as follows:

\begin{itemize}
\item \textbf{Computational Resource State $C_i(t)$:}
\begin{equation}
C_i(t) = \{c_i^{\text{gpu}}(t), c_i^{\text{cpu}}(t), u_i^{\text{gpu}}(t), u_i^{\text{cpu}}(t)\}
\end{equation}
Here, $c_i^{\text{gpu}}(t)$ and $c_i^{\text{cpu}}(t)$ represent the raw computational capabilities (e.g., in FLOPS) of the GPU and CPU, respectively. $u_i^{\text{gpu}}(t)$ and $u_i^{\text{cpu}}(t)$ indicate their current utilization rates.

\item \textbf{Memory Resource State $M_i(t)$:}
\begin{equation}
M_i(t) = \{m_i^{\text{gpu}}(t), m_i^{\text{cpu}}(t), m_i^{\text{used,gpu}}(t), m_i^{\text{used,cpu}}(t)\}
\end{equation}
Here, $m_i^{\text{gpu}}(t)$ and $m_i^{\text{cpu}}(t)$ denote the total memory capacity of the GPU and CPU, respectively. $m_i^{\text{used,gpu}}(t)$ and $m_i^{\text{used,cpu}}(t)$ represent the amount of memory currently in use.

\item \textbf{Bandwidth Resource State $B_i(t)$:}
\begin{equation}
B_i(t) = \{b_i^{\text{gpu-cpu}}(t), b_i^{\text{net}}(t), l_i^{\text{gpu-cpu}}(t), l_i^{\text{net}}(t)\}
\end{equation}
Here, $b_i^{\text{gpu-cpu}}(t)$ refers to the bandwidth between the GPU and CPU, while $b_i^{\text{net}}(t)$ represents the network bandwidth. Correspondingly, $l_i^{\text{gpu-cpu}}(t)$ and $l_i^{\text{net}}(t)$ denote the associated latencies.
\end{itemize}

\subsubsection{Resource Dynamism Modeling}

\begin{algorithm}[!t]
\caption{Resource Dynamism Modeling}
\label{alg:resource_dynamism_modeling}
\begin{algorithmic}[1]
\Require Resource indicator $r(t)$, smoothing coefficient $\alpha$, time window $w$
\Ensure Smoothed resource indicator $\hat{r}(t)$, resource stability indicator $S_r(t)$
\State Initialize $\hat{r}(0) = r(0)$
\For{$t = 1$ to end of time series}
    \State $\hat{r}(t) = \alpha \cdot r(t) + (1 - \alpha) \cdot \hat{r}(t-1)$
    \State $\sigma_r(t) = \sqrt{\frac{1}{w} \sum_{i=t-w+1}^{t} (r(i) - \hat{r}(t))^2}$
    \State $S_r(t) = 1 - \frac{\sigma_r(t)}{\hat{r}(t)}$
\EndFor
\State \Return $\hat{r}(t), S_r(t)$
\end{algorithmic}
\end{algorithm}

To accurately capture the dynamic fluctuations inherent in edge resources, we employ time series analysis to model resource state variations. For any given resource indicator $r(t)$, we apply the Exponentially Weighted Moving Average (EWMA) method to smooth out short-term fluctuations and derive a more stable estimate:
\begin{equation}
\hat{r}(t) = \alpha \cdot r(t) + (1 - \alpha) \cdot \hat{r}(t-1)
\end{equation}

where $\hat{r}(t)$ is the smoothed resource indicator value, and $\alpha \in (0, 1)$ is the smoothing coefficient. Building upon this, we define the resource stability indicator $S_r(t)$ as:
\begin{equation}
S_r(t) = 1 - \frac{\sigma_r(t)}{\hat{r}(t)}
\end{equation}

where $\sigma_r(t)$ represents the standard deviation of the resource indicator $r$ within the most recent $w$ time windows, calculated as:
\begin{equation}
\sigma_r(t) = \sqrt{\frac{1}{w} \sum_{i=t-w+1}^{t} (r(i) - \hat{r}(t))^2}
\end{equation}

A higher value of $S_r(t)$ indicates greater resource stability, whereas a lower value suggests more significant resource fluctuations.

\subsubsection{Resource Heterogeneity Quantification}

\begin{algorithm}[!t]
\caption{Resource Heterogeneity Quantification}
\label{alg:resource_heterogeneity_quantification}
\begin{algorithmic}[1]
\Require Set of edge devices $D = \{d_1, \dots, d_N\}$, resource dimensions $R$, weights $w_r$
\Ensure Resource difference matrix $\mathbf{D}$, heterogeneity scores $H_{d_i}$
\State Initialize $N \times N$ matrix $\mathbf{D}$ with zeros
\For{each device $d_i \in D$}
    \For{each device $d_j \in D$ where $j \neq i$}
        \State Calculate resource difference $D_{ij} = \sum_{r \in R} w_r \cdot \frac{|r_{d_i} - r_{d_j}|}{\max(r_{d_i}, r_{d_j})}$
    \EndFor
\EndFor
\For{each device $d_i \in D$}
    \State Calculate heterogeneity score $H_{d_i} = \frac{1}{N-1} \sum_{j \neq i} D_{ij}$
\EndFor
\State \Return $\mathbf{D}, H_{d_i}$
\end{algorithmic}
\end{algorithm}

To quantify the degree of resource heterogeneity among different edge devices, we introduce a resource difference matrix $\mathbf{D}$. For any two devices $d_i$ and $d_j$, their resource difference is defined as:
\begin{equation}
D_{ij} = \sum_{r \in R} w_r \cdot \frac{|r_{d_i} - r_{d_j}|}{\max(r_{d_i}, r_{d_j})}
\end{equation}

where $r$ iterates over all defined resource dimensions, and $w_r$ is the weight assigned to the corresponding resource dimension, satisfying the condition $\sum_r w_r = 1$. Based on this resource difference matrix, we can further define the heterogeneity score $H_{d_i}$ for device $d_i$ as:
\begin{equation}
H_{d_i} = \frac{1}{N-1} \sum_{j \neq i} D_{ij}
\end{equation}

where $N$ is the total number of devices in the system. A higher heterogeneity score for a device indicates a greater disparity in its resource configuration compared to other devices in the network.

\subsubsection{Resource-Aware Performance Prediction Model}

\begin{algorithm}[!t]
\caption{Resource Aware Performance Prediction Model}
\label{alg:performance_prediction_model}
\begin{algorithmic}[1]
\Require Current resource state $R_i(t)$, MoE model config $\mathcal{M}$, learning rate $\eta$
\Ensure Predicted performance metrics ($L_{\text{comp}}, L_{\text{comm}}, M_{\text{usage}}$), updated coefficients $\beta_{ij}$
\State Initialize regression coefficients $\beta_{ij}$
\While{not converged}
    \State Predict metrics: $L_{\text{comp}}(R_i(t), \mathcal{M})$, $L_{\text{comm}}(R_i(t), \mathcal{M})$, $M_{\text{usage}}(R_i(t), \mathcal{M})$
    \State Measure actual performance metrics: $y$
    \State Calculate prediction error loss: $L(y, \hat{y})$
    \State Update coefficients: $\beta_{ij}^{(t+1)} = \beta_{ij}^{(t)} - \eta \cdot \nabla_{\beta_{ij}} L(y, \hat{y})$
    \If{prediction accuracy converges}
        \State \textbf{break}
    \EndIf
\EndWhile
\State \Return Predicted metrics, updated $\beta_{ij}$
\end{algorithmic}
\end{algorithm}

To establish a clear correlation between the dynamic resource states and the performance of MoE models, we develop a performance prediction model. Given the current resource state $R_i(t)$ and the MoE model configuration $\mathcal{M}$ (which includes parameters such as the number of experts, expert distribution, and routing strategy), our model predicts the following key performance metrics:

\begin{itemize}
\item \textbf{Computation Latency} $(L_{\text{comp}}(R_i(t), \mathcal{M}))$: The estimated time required to execute the forward computation of the MoE model.
\item \textbf{Communication Latency} $(L_{\text{comm}}(R_i(t), \mathcal{M}))$: The estimated time taken for transferring expert parameters between different storage layers.
\item \textbf{Memory Usage} $(M_{\text{usage}}(R_i(t), \mathcal{M}))$: The predicted peak memory consumption of the MoE model during its operation.
\end{itemize}

We employ a polynomial regression model to predict these performance indicators. For instance, the computation latency can be modeled as:
\begin{equation}
L_{\text{comp}}(R_i(t), \mathcal{M}) = \sum_{i=1}^{n} \sum_{j=1}^{m} \beta_{ij} \cdot f_i(R_i(t)) \cdot g_j(\mathcal{M}) + \epsilon
\end{equation}

where $f_i(R_i(t))$ represents the feature function derived from the resource state, $g_j(\mathcal{M})$ is the feature function characterizing the model configuration, $\beta_{ij}$ are the regression coefficients, and $\epsilon$ denotes the error term.

To ensure continuous improvement in prediction accuracy, we implement an online learning strategy that updates the prediction model in real-time:
\begin{equation}
\beta_{ij}^{(t+1)} = \beta_{ij}^{(t)} - \eta \cdot \nabla_{\beta_{ij}} L(y, \hat{y})
\end{equation}

where $\eta$ is the learning rate, $L(y, \hat{y})$ is the prediction error loss function, $y$ represents the actual measured performance metric, and $\hat{y}$ is the value predicted by the model.

\subsubsection{Resource-Aware Decision Functions}

\begin{algorithm}[!t]
\caption{Resource Aware Decision Functions}
\label{alg:resource_aware_decision_functions}
\begin{algorithmic}[1]
\Require Resource state $R_i(t)$, memory stability $S_m(t)$, bandwidth stability $S_b(t)$, available GPU memory $m_i^{\text{avail,gpu}}(t)$, total GPU memory $m_i^{\text{gpu}}(t)$, expert frequency $f_{\text{act}}(e_{l,j})$, comm latency $L_{\text{comm}}(R_i(t), e_{l,j})$
\Ensure Aggregation granularity $E'(t)$, offloading priority $P_{\text{offload}}(e_{l,j}, t)$, prefetching threshold $\theta_{\text{pre}}(t)$
\State \textit{// Expert Aggregation Granularity Decision}
\State Calculate $E'(t) = \min\left\{\max\left\{1, \left\lfloor \frac{A_m \cdot m_i^{\text{avail,gpu}}(t)}{S_e \cdot (1 + \beta(1 - S_m(t)))} \right\rfloor \right\}, E\right\}$
\State \textit{// Expert Offloading Priority Function}
\State Calculate $P_{\text{offload}}(e_{l,j}, t) = \gamma \cdot f_{\text{act}}(e_{l,j}) + (1 - \gamma) \cdot \frac{s_{l,j}}{L_{\text{comm}}(R_i(t), e_{l,j})}$
\State \textit{// Expert Prefetching Decision Function}
\State Calculate $\theta_{\text{pre}}(t) = \theta_{\text{base}} \cdot S_b(t) \cdot \left(1 + \delta \cdot \frac{m_i^{\text{avail,gpu}}(t)}{m_i^{\text{gpu}}(t)}\right)$
\State \Return $E'(t), P_{\text{offload}}(e_{l,j}, t), \theta_{\text{pre}}(t)$
\end{algorithmic}
\end{algorithm}

Based on the comprehensive resource state characterization and the developed performance prediction model, we define a set of resource-aware decision functions that provide critical guidance for optimizing expert aggregation and offloading strategies.

\textbf{Expert Aggregation Granularity Decision Function:} This function dynamically determines the optimal granularity for expert aggregation, taking into account the current resource state and its stability:
\begin{equation}
E'(t) = \min\left\{\max\left\{1, \left\lfloor \frac{A_m \cdot m_i^{\text{avail,gpu}}(t)}{S_e \cdot (1 + \beta(1 - S_m(t)))} \right\rfloor \right\}, E\right\} 
\end{equation}

where $E'(t)$ is the target number of experts after aggregation, $m_i^{\text{avail,gpu}}(t)$ is the currently available GPU memory, $S_e$ is the parameter size of a single expert, $S_m(t)$ is the stability indicator of memory resources, $A_m \in (0, 1)$ is the memory allocation ratio, and $\beta$ is the risk aversion coefficient.

\textbf{Expert Offloading Priority Function:} This function assigns an offloading priority to each expert, considering both the current resource state and the estimated communication latency:

\begin{multline}
P_{\text{offload}}(e_{l,j}, t) = \gamma \cdot f_{\text{act}}(e_{l,j}) + (1 - \gamma) \cdot \frac{s_{l,j}}{L_{\text{comm}}(R_i(t), e_{l,j})}
\end{multline}

where $f_{\text{act}}(e_{l,j})$ represents the activation frequency of expert $e_{l,j}$, $L_{\text{comm}}(R_i(t), e_{l,j})$ is the estimated communication latency for loading expert $e_{l,j}$ under the current resource state, and $\gamma$ is a balancing parameter.

\textbf{Expert Prefetching Decision Function:} This function determines whether to prefetch experts by comprehensively evaluating the bandwidth resource state, its stability, and the accuracy of expert activation predictions:
\begin{multline}
\theta_{\text{pre}}(t) = \theta_{\text{base}} \cdot S_b(t) \cdot \left(1 + \delta \cdot \frac{m_i^{\text{avail,gpu}}(t)}{m_i^{\text{gpu}}(t)}\right)
\end{multline}

where $\theta_{\text{pre}}(t)$ is the dynamic prefetching threshold, $\theta_{\text{base}}$ is the base threshold, $S_b(t)$ is the stability indicator of bandwidth resources, and $\delta$ is an adjustment parameter. If the predicted probability $p(e_i^{(l+1)} | x_t)$ of expert $e_i$ being activated exceeds $\theta_{\text{pre}}(t)$, the system initiates prefetching of that expert.

\subsubsection{Theoretical Analysis and Performance Guarantee}

\begin{algorithm}[!t]
\caption{Theoretical Analysis and Performance Guarantee}
\label{alg:theoretical_analysis}
\begin{algorithmic}[1]
\Require Resource state $R_d(t)$, memory resource $m_d^{\text{gpu}}(t)$, prefetching hit rate $P_{\text{hit}}$
\Ensure Proof of reduced inference failure probability and average inference latency
\State \textit{// Proof for reduced inference failure probability}
\State Assume memory resource $m_d^{\text{gpu}}(t)$ follows distribution $\mathcal{D}$
\State Define fixed aggregation granularity strategy: $E'^{\text{fixed}} = \lfloor(A_m \cdot \mu_m) / S_e\rfloor$
\State Calculate $P_{\text{fail}}^{\text{fixed}} = P(m_d^{\text{gpu}}(t) < E'^{\text{fixed}} \cdot S_e + M_{\text{other}})$
\State Define dynamic resource-aware strategy
\State Calculate $P_{\text{fail}}^{\text{dynamic}} = P\left(m_d^{\text{gpu}}(t) < \frac{A_m \cdot m_d^{\text{gpu}}(t)}{1 + \beta(1 - S_m(t))} + M_{\text{other}}\right)$
\State Show $P_{\text{fail}}^{\text{dynamic}} < P_{\text{fail}}^{\text{fixed}}$
\State \textit{// Proof for reduced average inference latency}
\State Define no-prefetching strategy: $L_{\text{no-prefetch}} = T_{\text{comp}} + T_{\text{load}}$
\State Define fixed-threshold prefetching strategy: $L_{\text{fixed}} = T_{\text{comp}} + (1 - P_{\text{hit}}^{\text{fixed}}) \cdot T_{\text{load}}$
\State Define resource-aware dynamic prefetching strategy
\State Show $P_{\text{hit}}^{\text{dynamic}} \geq P_{\text{hit}}^{\text{fixed}}$
\State Conclude $L_{\text{dynamic}} \leq L_{\text{fixed}} < L_{\text{no-prefetch}}$
\State \Return Proofs completed
\end{algorithmic}
\end{algorithm}

This subsection provides theoretical analysis to substantiate the effectiveness of the proposed dynamic resource perception module.

\begin{theorem}
Under a given resource state $R_d(t)$, employing a dynamic resource-aware expert aggregation strategy significantly reduces the inference failure probability compared to a static, fixed aggregation granularity.
\end{theorem}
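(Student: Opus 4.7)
The plan is to treat the available GPU memory $m_d^{\text{gpu}}(t)$ as a random variable drawn from a distribution $\mathcal{D}$ with mean $\mu_m$ and positive variance $\sigma_m^2$, and then to compare, under two aggregation policies, the probability of the event that the total memory claimed by the deployed experts (plus an overhead term $M_{\text{other}}$) exceeds $m_d^{\text{gpu}}(t)$. The static policy fixes $E'^{\text{fixed}} = \lfloor A_m \mu_m / S_e \rfloor$ once and for all from the long-run mean, while the dynamic policy uses the decision function $E'(t) = \lfloor A_m\, m_d^{\text{gpu}}(t) / (S_e(1 + \beta(1 - S_m(t)))) \rfloor$ defined in the resource-aware decision module. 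Because the static rule ignores instantaneous fluctuations while the dynamic rule adapts to them, my expectation is that the failure event has strictly smaller probability under the latter for any nondegenerate $\mathcal{D}$.

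First I would formalize the static failure event as $F^{\text{fixed}} = \{m_d^{\text{gpu}}(t) < A_m \mu_m + M_{\text{other}}\}$ (absorbing the floor into negligible error) and observe that its probability is bounded below by $\mathcal{D}((-\infty, A_m \mu_m + M_{\text{other}}))$, which is strictly positive and, for symmetric $\mathcal{D}$ with $A_m$ close to one, close to $1/2$. Next I would substitute $m_d^{\text{gpu}}(t)$ into the dynamic rule to obtain the realized allocation $A_m\, m_d^{\text{gpu}}(t)/(1 + \beta(1 - S_m(t)))$, so that the dynamic failure event becomes $F^{\text{dynamic}} = \{m_d^{\text{gpu}}(t)\,(1 - A_m/(1 + \beta(1 - S_m(t)))) < M_{\text{other}}\}$, which reduces to $\{m_d^{\text{gpu}}(t) < M_{\text{other}}(1 + \beta(1 - S_m(t)))/(1 + \beta(1 - S_m(t)) - A_m)\}$. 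Because the right-hand threshold is a constant proportional to $M_{\text{other}}$ while the static threshold scales additively with $A_m \mu_m$, for any $M_{\text{other}} \ll \mu_m$ the dynamic threshold is strictly smaller.

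The main inequality to establish is therefore $F^{\text{dynamic}} \subsetneq F^{\text{fixed}}$ almost surely, from which $P_{\text{fail}}^{\text{dynamic}} < P_{\text{fail}}^{\text{fixed}}$ follows by monotonicity of $\mathcal{D}$. I would verify the set inclusion by direct threshold comparison and then quantify the gap by evaluating both probabilities under a concrete $\mathcal{D}$ (for instance a truncated Gaussian), exhibiting a ratio that grows with $\sigma_m$ and with $\beta$. This matches the intended interpretation of $S_m(t)$ as an estimator of $1 - \sigma_m/\mu_m$: when volatility is high, $1 - S_m(t)$ is large, the safety factor $1 + \beta(1 - S_m(t))$ inflates, and the dynamic allocation shrinks precisely in the regime where the static rule is most likely to overcommit.

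The step I expect to be the main obstacle is reconciling the discrete floor in $E'(t)$ with the continuous probability calculus: small perturbations in $m_d^{\text{gpu}}(t)$ can change $E'(t)$ by an integer and create measure-zero boundary atoms where the strict inequality collapses. I plan to address this by first proving the weak comparison $P_{\text{fail}}^{\text{dynamic}} \leq P_{\text{fail}}^{\text{fixed}}$ for the continuous relaxation of the allocation, and then arguing that the rounding error is dominated by a single expert's footprint $S_e$, which is absorbed into $M_{\text{other}}$ without reversing the inequality. A secondary technical point is the conditioning structure of $S_m(t)$ on the history of $m_d^{\text{gpu}}(\cdot)$; I would treat $S_m(t)$ as measurable with respect to the filtration up to time $t{-}1$ so that $S_m(t)$ and $m_d^{\text{gpu}}(t)$ can be integrated independently, which is the assumption implicitly used in Algorithm~\ref{alg:theoretical_analysis}.
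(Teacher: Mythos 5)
Your proposal follows essentially the same route as the paper: both treat $m_d^{\text{gpu}}(t)$ as a random variable with mean $\mu_m$, write the fixed-granularity failure event as $\{m_d^{\text{gpu}}(t) < A_m\mu_m + M_{\text{other}}\}$ (absorbing the floor), write the dynamic failure event as $\{m_d^{\text{gpu}}(t) < A_m m_d^{\text{gpu}}(t)/(1+\beta(1-S_m(t))) + M_{\text{other}}\}$, and conclude by comparing the two events. The difference is in the key step. The paper proves only the pointwise bound $A_m m_d^{\text{gpu}}(t)/(1+\beta(1-S_m(t))) < A_m m_d^{\text{gpu}}(t)$ and then immediately asserts $P_{\text{fail}}^{\text{dynamic}} < P_{\text{fail}}^{\text{fixed}}$; as written this does not close the argument, because the fixed threshold is anchored to $\mu_m$ rather than to the realized $m_d^{\text{gpu}}(t)$. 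You instead solve the dynamic event for $m_d^{\text{gpu}}(t)$, obtaining the threshold $M_{\text{other}}\bigl(1+\beta(1-S_m(t))\bigr)/\bigl(1+\beta(1-S_m(t))-A_m\bigr)$, and compare thresholds directly; this surfaces the side condition that is actually needed, namely $M_{\text{other}} \le \mu_m\bigl(1+\beta(1-S_m(t))-A_m\bigr)$ (your ``$M_{\text{other}}\ll\mu_m$''), together with a nondegeneracy assumption on $\mathcal{D}$ to make the inequality strict. Your handling of the integer floor and of the measurability of $S_m(t)$ with respect to the history addresses real technical points the paper omits. The one caveat is that the theorem claims an unconditional strict improvement, whereas both your argument and the paper's implicitly require these extra hypotheses; stating them explicitly would be the honest form of the result.
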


\begin{proof}
Let us assume that the memory resource $m_d^{\text{gpu}}(t)$ follows a certain distribution $\mathcal{D}$, with a mean of $\mu_m$ and a variance of $\sigma_m^2$. In a fixed aggregation granularity strategy, the number of experts is set to $E'_{\text{fixed}} = \lfloor(A_m \cdot \mu_m) / S_e\rfloor$. Inference failure occurs when the memory usage exceeds the available GPU memory, i.e., $m_d^{\text{used,gpu}}(t) > m_d^{\text{gpu}}(t)$.

For the fixed strategy, the probability of inference failure is given by:
\begin{align}
P_{\text{fail}}^{\text{fixed}} &= P(m_d^{\text{used,gpu}}(t) > m_d^{\text{gpu}}(t)) \nonumber \\
&= P(m_d^{\text{gpu}}(t) < E'_{\text{fixed}} \cdot S_e + M_{\text{other}})
\end{align}

where $M_{\text{other}}$ represents the memory occupied by parameters other than experts.

Conversely, for the dynamic resource-aware strategy, the probability of inference failure is:
\begin{equation}
P_{\text{fail}}^{\text{dynamic}} = P\left(m_d^{\text{gpu}}(t) < \frac{A_m \cdot m_d^{\text{gpu}}(t)}{1 + \beta(1 - S_m(t))} + M_{\text{other}}\right)
\end{equation}

Given that $\beta > 0$ and $S_m(t) < 1$, it follows that:
\begin{equation}
\frac{A_m \cdot m_d^{\text{gpu}}(t)}{1 + \beta(1 - S_m(t))} < A_m \cdot m_d^{\text{gpu}}(t)
\end{equation}

Therefore, $P_{\text{fail}}^{\text{dynamic}} < P_{\text{fail}}^{\text{fixed}}$, which completes the proof.
\end{proof}

\begin{theorem}
In scenarios with fluctuating bandwidth resources, the implementation of a resource-aware expert prefetching strategy leads to a reduction in average inference latency.
\end{theorem}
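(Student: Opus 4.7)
The plan is to reduce the theorem to a hit-rate comparison between the fixed-threshold and the dynamic resource-aware prefetching policies, mirroring the structure of Theorem~1. First I would decompose the per-token, per-layer inference latency into a computation term $T_{\text{comp}}$ (independent of the prefetching strategy) and an expert-loading term that contributes the transfer cost $T_{\text{load}}$ whenever a required expert is absent from GPU memory at the moment it is needed. Without prefetching, every activated non-resident expert is transferred on the critical path, yielding $L_{\text{no-prefetch}} = T_{\text{comp}} + T_{\text{load}}$; under any policy with hit rate $P_{\text{hit}}$ only the missed fraction pays the loading cost, so $L = T_{\text{comp}} + (1 - P_{\text{hit}}) \cdot T_{\text{load}}$. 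Instantiating this template with the static threshold $\theta_{\text{base}}$ and with the adaptive threshold $\theta_{\text{pre}}(t)$ defined in the prefetching decision function produces the two quantities $L_{\text{fixed}}$ and $L_{\text{dynamic}}$ to be compared.

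The core technical step is to establish $P_{\text{hit}}^{\text{dynamic}} \geq P_{\text{hit}}^{\text{fixed}}$. I would condition on the bandwidth trajectory $b_i^{\text{net}}(t)$ and exploit the monotone structure of $\theta_{\text{pre}}(t) = \theta_{\text{base}} \cdot S_b(t) \cdot \bigl(1 + \delta \cdot m_i^{\text{avail,gpu}}(t)/m_i^{\text{gpu}}(t)\bigr)$: when bandwidth is unstable ($S_b(t)$ small) or memory is tight, $\theta_{\text{pre}}(t)$ rises above $\theta_{\text{base}}$ and suppresses low-confidence prefetches that cannot complete in time; when bandwidth is stable and memory is ample, $\theta_{\text{pre}}(t)$ falls below $\theta_{\text{base}}$ and admits additional confident prefetches that can be overlapped with computation. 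Expressing $P_{\text{hit}}$ as an expectation over the predictor-confidence distribution $p(e_i^{(l+1)}\mid x_t)$ of the event that a candidate prefetch both exceeds the active threshold and arrives before its expert is needed, I would then show pointwise in $t$ that the adaptive threshold contributes a weakly larger integrand. The concluding step is routine: substituting $P_{\text{hit}}^{\text{dynamic}} \geq P_{\text{hit}}^{\text{fixed}} > 0$ into the loading-cost formula and taking expectation over the bandwidth process yields $L_{\text{dynamic}} \leq L_{\text{fixed}} < L_{\text{no-prefetch}}$, which is precisely the claimed inequality.

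The hard part will be the hit-rate comparison. Without a mild calibration assumption on $p(e_i^{(l+1)}\mid x_t)$, namely that higher predicted probabilities correspond to genuinely more likely activations, it is not automatic that raising the threshold in adverse conditions strictly increases precision or that lowering it in favorable conditions strictly increases recall. I would therefore introduce a monotone-calibration hypothesis, or alternatively restrict the comparison to expectations over a bandwidth distribution with nontrivial variance, so that a Jensen-type argument on the hit-rate surface as a function of threshold and instantaneous bandwidth delivers the required inequality. The fluctuating-bandwidth hypothesis in the theorem is essential here: under constant bandwidth the dynamic and fixed thresholds effectively coincide (up to the memory correction) and the improvement degenerates, which is why the statement is restricted to the fluctuating regime.
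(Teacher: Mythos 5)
Your proposal follows essentially the same route as the paper: decompose latency as $T_{\text{comp}}$ plus a loading term weighted by the miss rate, instantiate this for the no-prefetch, fixed-threshold, and dynamic policies, reduce everything to the comparison $P_{\text{hit}}^{\text{dynamic}} \geq P_{\text{hit}}^{\text{fixed}}$, and conclude $L_{\text{dynamic}} \leq L_{\text{fixed}} < L_{\text{no-prefetch}}$. The one substantive difference is at the critical step: the paper simply \emph{asserts} the hit-rate inequality with an informal appeal to the dynamic policy being ``more aggressive when stable, more conservative when unstable,'' whereas you correctly identify this as the hard part and propose to derive it from a monotone-calibration hypothesis on the predictor $p(e_i^{(l+1)} \mid x_t)$ together with an expectation over the bandwidth process. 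Your version is the more defensible one --- without some such assumption, raising the threshold under adverse conditions need not improve the hit rate, and the paper's claim does not follow from its stated premises. Your observation that the advantage degenerates under constant bandwidth (so the fluctuating-bandwidth hypothesis is doing real work) is also absent from the paper and sharpens the statement. In short: same skeleton, but you have supplied the missing lemma that the paper's proof takes for granted.
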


\begin{proof}
Let $P_{\text{hit}}$ denote the prefetching hit rate, $T_{\text{load}}$ represent the expert loading time, and $T_{\text{comp}}$ signify the computation time. For a strategy that does not employ prefetching, the average latency is simply:
\begin{equation}
L_{\text{no-prefetch}} = T_{\text{comp}} + T_{\text{load}}
\end{equation}

For a fixed-threshold prefetching strategy, assuming a prefetching hit rate of $P_{\text{hit}}^{\text{fixed}}$, the average latency is:
\begin{equation}
L_{\text{fixed}} = T_{\text{comp}} + (1 - P_{\text{hit}}^{\text{fixed}}) \cdot T_{\text{load}}
\end{equation}

In contrast, for a resource-aware dynamic prefetching strategy, the prefetching hit rate is $P_{\text{hit}}^{\text{dynamic}}$. This strategy intelligently adjusts prefetching aggressiveness based on bandwidth resource stability: it increases aggressiveness when bandwidth stability is high and adopts a more conservative approach when stability is low. Consequently, it holds that $P_{\text{hit}}^{\text{dynamic}} \geq P_{\text{hit}}^{\text{fixed}}$.

Thus, $L_{\text{dynamic}} \leq L_{\text{fixed}} < L_{\text{no-prefetch}}$, which concludes the proof.
\end{proof}

\subsection{Expert Aggregation Method with Dynamic Heterogeneous Resource Awareness}

Traditional expert aggregation methods typically involve statically merging expert parameters during pretraining or fine-tuning phases. However, the dynamic nature of resource states in edge environments necessitates more flexible deployment strategies. To enhance the deployment efficiency of MoE models in resource-constrained settings, we propose a resource-aware expert aggregation strategy. This strategy is built upon the "Merge, then Compress" method~\cite{li2023merge} and adopts a "pre-computation + dynamic selection" implementation to adapt to the fluctuating resource conditions in edge environments.

\subsubsection{Expert Similarity Assessment}

To facilitate effective expert merging, accurate assessment of similarity between individual experts is crucial. Given any two experts, $e_i$ and $e_j$, their similarity is quantified across two primary dimensions: parameter space and functional space.

\textbf{Parameter Space Similarity:} This metric evaluates the structural resemblance between experts by computing the cosine similarity of their respective parameter vectors. The formula is given by:
\begin{equation}
S_{\text{param}}(e_i, e_j) = \frac{\vec{e_i} \cdot \vec{e_j}}{\|\vec{e_i}\| \cdot \|\vec{e_j}\|}
\end{equation}

where $\vec{e_i}$ and $\vec{e_j}$ represent the parameter vectors of expert $e_i$ and $e_j$, respectively.

\textbf{Functional Space Similarity:} This metric assesses the behavioral congruence of experts by comparing their output distributions on a carefully selected calibration dataset. The functional similarity is quantified using the Kullback-Leibler (KL) divergence, as follows:
\begin{equation}
S_{\text{func}}(e_i, e_j) = 1 - \frac{1}{|\mathcal{D}|} \sum_{x \in \mathcal{D}} d_{\text{KL}}(e_i(x) \| e_j(x))
\end{equation}

where $d_{\text{KL}}$ denotes the KL divergence, and $\mathcal{D}$ is the calibration dataset. A smaller KL divergence indicates higher functional similarity.

To provide a comprehensive measure of similarity, both parameter and functional space similarities are combined into a single, integrated metric:
\begin{equation}
S(e_i, e_j) = \alpha \cdot S_{\text{param}}(e_i, e_j) + (1 - \alpha) \cdot S_{\text{func}}(e_i, e_j)
\end{equation}

Here, $\alpha$ is a balancing parameter ($0 \leq \alpha \leq 1$) that can be adjusted based on the specific requirements and characteristics of the application scenario.

\subsubsection{Frequency-Based Expert Aggregation Strategy}

Our framework incorporates two variants of the "Merge, then Compress" expert fusion approach~\cite{li2023merge}: fixed-ratio fusion and adaptive-ratio fusion. Both methods leverage expert activation frequency and similarity analysis but differ in their strategy for determining the aggregation ratio for each MoE layer. The aggregation process involves three key steps:

\textbf{Step 1: Principal Expert Identification.} This step involves identifying the most frequently activated experts within each MoE layer. Using a carefully curated calibration dataset, the activation frequency of each expert $e_{l,j}$ in layer $l$ is calculated as:
\begin{equation}
f_{\text{act}}(e_{l,j}) = \frac{\text{Number of times expert } e_{l,j} \text{ is activated}}{\text{Total activations in layer } l}
\end{equation}

Experts whose activation frequency exceeds a predefined threshold, $\theta_{\text{act}}$, are designated as principal experts for that layer. These experts are considered crucial for maintaining the model's core functionality.

\textbf{Step 2: Expert Grouping.} Following the identification of principal experts, secondary experts are grouped with the most functionally and parametrically similar principal expert. If the set of principal experts for layer $l$ is $\{e_{l,m_1}, e_{l,m_2}, \ldots, e_{l,m_k}\}$, then each secondary expert $e_{l,j}$ is assigned to group $g_i$, where $i$ is determined by:
\begin{equation}
i = \arg\max_{1 \leq r \leq k} S(e_{l,j}, e_{l,m_r})
\end{equation}

This ensures that experts with similar functionalities are grouped together to facilitate effective merging.

\textbf{Step 3: Frequency-Based Merging.} For each identified group $g_i = \{e_{l,m_i}, e_{l,j_1}, e_{l,j_2}, \ldots\}$, a weighted averaging operation is performed to merge the experts. The merged expert $\hat{e}_{l,i}$ is computed as:
\begin{equation}
\hat{e}_{l,i} = \frac{\sum_{e_{l,j} \in g_i} f_{\text{act}}(e_{l,j}) \cdot e_{l,j}}{\sum_{e_{l,j} \in g_i} f_{\text{act}}(e_{l,j})}
\end{equation}

This weighted averaging strategy ensures that experts with higher activation frequencies exert greater influence on the merged expert, thereby preserving the model's critical functionalities and performance.

\textbf{Fixed-Ratio Fusion:} In this variant, a predetermined, fixed proportion of experts is retained in each MoE layer after aggregation. For a layer originally containing $E$ experts, fixed-ratio fusion retains $E' = \lfloor r \cdot E \rfloor$ experts, where $r$ is the predefined retention ratio (e.g., $r = 0.5$ for retaining half the experts).

\textbf{Adaptive-Ratio Fusion:} This method dynamically determines the number of experts to retain for each layer, adapting to the unique characteristics of the expert activation distribution within that layer. We quantify the expert activation distribution using the expert activation entropy for layer $l$:
\begin{equation}
H_l = -\sum_{j=1}^{E} f_{\text{act}}(e_{l,j}) \cdot \log f_{\text{act}}(e_{l,j})
\end{equation}

To normalize the entropy across different layers, we use the normalized entropy:
\begin{equation}
\bar{H}_l = \frac{H_l}{\log E}
\end{equation}

The number of retained experts for layer $l$, $E'_l$, is then calculated based on this normalized entropy:
\begin{equation}
E'_l = \max\left\{E_{\text{min}}, \left\lfloor E \cdot (r_{\text{base}} + \Delta r \cdot \bar{H}_l) \right\rfloor\right\}
\end{equation}

where $E_{\text{min}}$ is the minimum number of experts to retain, $r_{\text{base}}$ is a base retention ratio, and $\Delta r$ is an adjustment parameter.

\subsubsection{Model Architecture Specialization}

For MoE models employing Encoder-Decoder architectures, such as Switch-base-32, we implement an architecture-specific optimization strategy. This approach acknowledges the distinct computational characteristics and resource consumption patterns of the Encoder and Decoder components, necessitating differentiated optimization strategies.

\textbf{Encoder-Focused Optimization:} Our primary focus for applying expert aggregation and offloading strategies is the Encoder part of the Switch-base-32 model. This prioritization is based on the following key considerations:

\begin{itemize}
\item The Encoder is responsible for processing input sequences, which typically involves a higher computational load and is inherently more amenable to parallel processing techniques.
\item The Decoder operates in an autoregressive manner, generating tokens sequentially. This token-by-token execution leads to a high frequency of expert scheduling, where the overhead associated with offloading might potentially negate any performance benefits.
\item Empirical studies have consistently shown that optimizing the Encoder part yields the most significant efficiency improvements while effectively preserving the overall model performance.
\end{itemize}

\subsubsection{Resource-Aware Pre-computed Model Selection}

Leveraging the expert fusion strategies outlined above, we pre-compute and store multiple model versions, each configured with different aggregation settings, prior to deployment. For the Switch-base-32 model, our pre-computed model library encompasses the following configurations:

\begin{itemize}
\item \textbf{Fixed-Ratio Fusion Series:}
\begin{itemize}
\item \textbf{Light Fusion Version:} Retains 24 experts per layer ($r = 0.75$).
\item \textbf{Medium Fusion Version:} Retains 16 experts per layer ($r = 0.5$).
\item \textbf{Heavy Fusion Version:} Retains 8 experts per layer ($r = 0.25$).
\end{itemize}

\item \textbf{Adaptive-Ratio Fusion Series:}
\begin{itemize}
\item \textbf{Light Adaptive Version:} $r_{\text{base}} = 0.6$, $\Delta r = 0.3$.
\item \textbf{Medium Adaptive Version:} $r_{\text{base}} = 0.4$, $\Delta r = 0.2$.
\item \textbf{Heavy Adaptive Version:} $r_{\text{base}} = 0.2$, $\Delta r = 0.1$.
\end{itemize}
\end{itemize}

During runtime, the resource perception module continuously monitors the real-time resource status of the edge devices. Based on the current available resources, $M_{\text{avail}}(t)$, and the prevailing performance requirements, the system dynamically selects the optimal model version from the pre-computed library. The selection criterion is formulated as:
\begin{equation}
\mathcal{M}^* = \arg\max_{\mathcal{M}_j \in \mathcal{M}_{\text{lib}}} P(\mathcal{M}_j) \quad \text{s.t.} \quad M_{\text{req}}(\mathcal{M}_j) \leq M_{\text{avail}}(t)
\end{equation}

where $P(\mathcal{M}_j)$ represents the performance metric of model $\mathcal{M}_j$, $M_{\text{req}}(\mathcal{M}_j)$ denotes its resource requirements, and $\mathcal{M}_{\text{lib}}$ is the set of all precomputed model versions.

To mitigate the overhead associated with frequent model version switching, we incorporate a mechanism that evaluates the stability of resource states and the potential costs of switching. A model version switch is executed only when two conditions are met: (1) the resource state exhibits a significant and sustained change, and (2) the anticipated performance benefits of switching demonstrably outweigh the associated costs. This condition is expressed as:
\begin{equation}
\Delta P(\mathcal{M}_{\text{current}}, \mathcal{M}^*) > \lambda \cdot C_{\text{switch}} \quad \text{and} \quad T_{\text{stable}} > T_{\text{threshold}}
\end{equation}

where $\Delta P$ represents the performance improvement from switching, $\lambda$ is a weighting factor, $C_{\text{switch}}$ is the cost of switching, $T_{\text{stable}}$ is the duration for which the new resource state has been stable, and $T_{\text{threshold}}$ is a minimum stability duration.

\subsection{Expert Activation Prediction and Multi-Level Storage Strategy}

Building upon the selected expert aggregation model, we further develop an efficient expert offloading strategy to facilitate seamless and efficient collaboration between GPU and CPU memory. Consistent with our architecture-specific optimization approach, for Encoder-Decoder models like Switch-base-32, expert prediction and offloading strategies are primarily applied to the Encoder component.

\subsubsection{Expert Activation Prediction}

To minimize the latency associated with expert loading, we design a lightweight prediction model specifically for pre-determining expert activation patterns. This model is trained using a combination of features, including:

\begin{itemize}
\item \textbf{Current Layer Activation Features:} The set of experts selected in the current layer, denoted as $\{e_1^l, e_2^l, \ldots, e_K^l\}$.
\item \textbf{Input Features:} The embedding representation of the current input token, $h_t$.
\item \textbf{Context Features:} Additional contextual information, such as positional embeddings and historical expert activation patterns, denoted as $c_t$.
\end{itemize}

The prediction model employs a two-layer multilayer perceptron (MLP) structure, which outputs the probability of expert $e_i$ being activated in the $(l+1)$-th layer for a given token $x_t$:

\begin{multline}
p(e_i^{(l+1)} | x_t) = \text{Softmax}(W_2 \cdot \text{ReLU}(W_1 \cdot [e_1^l, \ldots, e_K^l, h_t, c_t] \\
+ b_1) + b_2)  
\end{multline}

where $W_1$ and $W_2$ are weight matrices, $b_1$ and $b_2$ are bias vectors, and ReLU is the rectified linear unit activation function.

\subsubsection{Multi-Level Storage Collaborative Scheduling}

Based on the expert activation prediction results, we propose a multilevel storage collaborative scheduling strategy, as conceptually illustrated in Figure~\ref{fig:multi_level_storage}. This strategy partitions device memory into three distinct hierarchical levels to optimize expert management:

\begin{figure}[!t]
\centering
\includegraphics[width=\columnwidth]{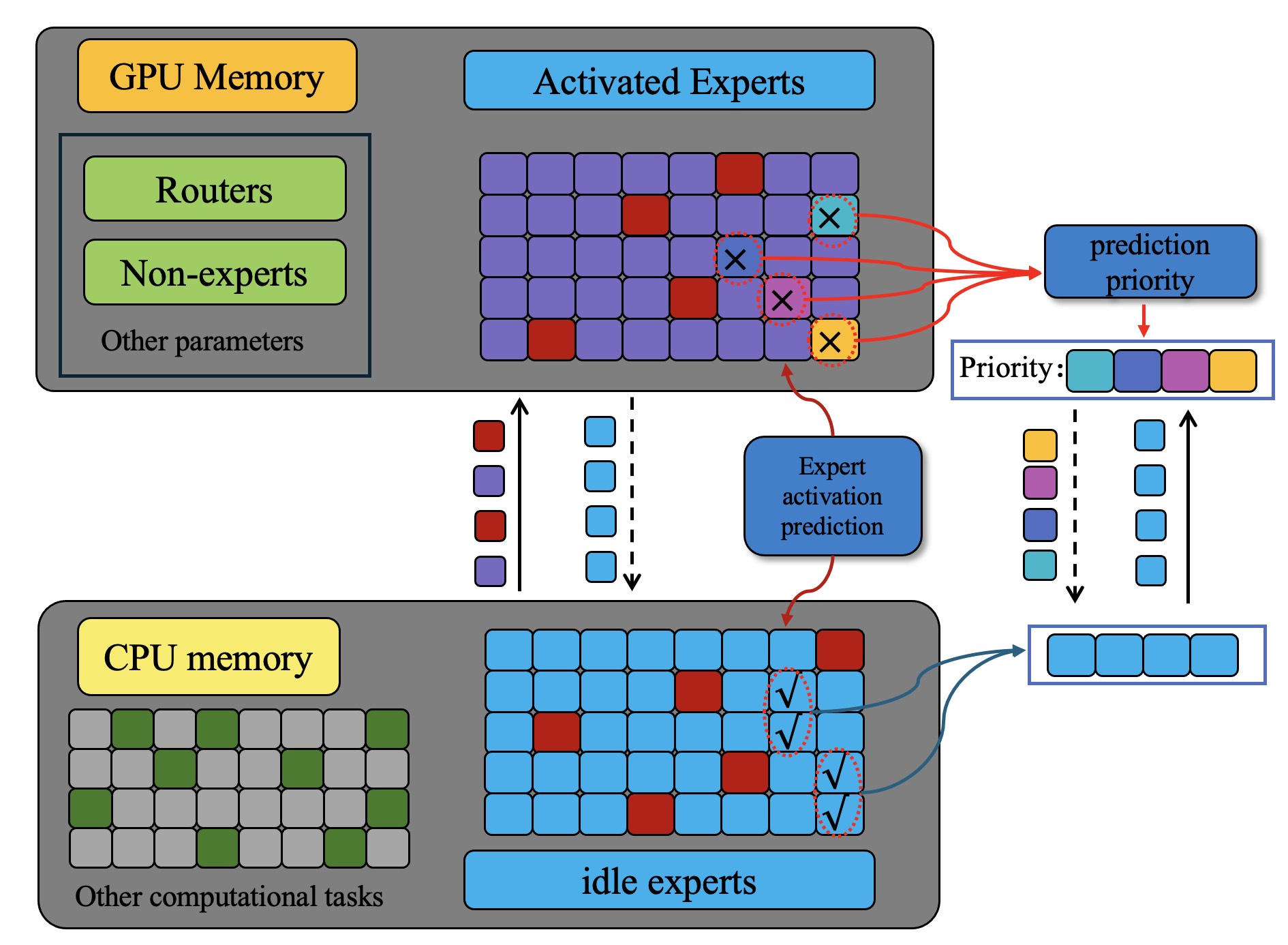}
\caption{Multilevel Storage Collaborative Scheduling Strategy}
\label{fig:multi_level_storage}
\end{figure}

\begin{itemize}
\item \textbf{GPU Workspace:} This area is designated for storing currently active experts and non-expert parameters that are immediately required for computation.
\item \textbf{GPU Cache Area:} This region serves as a temporary buffer for experts that have a high predicted probability of being activated in the near future, enabling rapid access.
\item \textbf{CPU Memory:} This constitutes the primary storage for all remaining experts that are not currently in the GPU workspace or cache.
\end{itemize}

The dynamic scheduling of experts between these different storage layers adheres to the following rules:

\begin{itemize}
\item \textbf{Prefetching Strategy:} The prefetching threshold, $\theta_{\text{pre}}$, is adaptively adjusted based on the output probability $p(e_i^{(l+1)} | x_t)$ from the prediction model. This adaptive threshold is defined as:
\begin{equation}
\theta_{\text{pre}} = \gamma \cdot \frac{M_{\text{GPU}}^{\text{avail}}}{M_{\text{GPU}}}
\end{equation}
where $\gamma$ is a base prefetching threshold, $M_{\text{GPU}}^{\text{avail}}$ represents the currently available GPU memory, and $M_{\text{GPU}}$ denotes the total GPU memory capacity. If the predicted activation probability $p(e_i^{(l+1)} | x_t)$ for expert $e_i$ exceeds $\theta_{\text{pre}}$, expert $e_i$ is proactively prefetched into the GPU cache area.

\item \textbf{Eviction Strategy:} We introduce a hybrid cache replacement policy that comprehensively considers multiple factors: access frequency, expert importance, and predicted activation probability. The eviction score $S_{\text{evict}}(e_i)$ for an expert $e_i$ is calculated as:
\begin{align}
S_{\text{evict}}(e_i) &= (1 - \delta) \cdot (1 - p(e_i^{(l+1)} | x_t)) \nonumber \\
&\quad + \delta \cdot \left(\lambda \cdot \frac{1}{f_{\text{recent}}(e_i)} + (1 - \lambda) \cdot \frac{1}{I(e_i)}\right)
\end{align}
where $f_{\text{recent}}(e_i)$ denotes the recent access frequency of expert $e_i$, $I(e_i)$ represents the importance of expert $e_i$ (a predefined or learned metric), and $\delta$ and $\lambda$ are weighting parameters ($0 \leq \delta, \lambda \leq 1$). Experts with the highest eviction score are prioritized for offloading from the GPU to make space for more critical or frequently accessed experts.
\end{itemize}

\subsection{Expert Offloading and Scheduling Strategy Based on Heterogeneous Resource Collaboration}

Building upon expert aggregation techniques, we further refine resource management by designing an expert offloading and scheduling strategy that leverages heterogeneous resource collaboration within the edge server. This strategy aims to minimize GPU memory demands and communication overhead, thereby enhancing the utilization efficiency of the edge server's heterogeneous resources through adaptive dynamic caching and scheduling between GPU and CPU memory.

\subsubsection{Adaptive Dynamic Caching Based on Multi-Dimensional Expert Information Fusion and Intra-Server Heterogeneous Resource Collaboration}

Our approach utilizes historical token computation data from the edge server $S$ to estimate the popularity, $L_H^S = \{l_H^{(1:1)}, l_H^{(1:2)}, \ldots, l_H^{(m:n)}\}$, of different experts, $E_H = \{e_H^{(1:1)}, e_H^{(1:2)}, \ldots, e_H^{(m:n)}\}$, within the Mixture-of-Experts model $H$. Here, $n$ represents the number of experts per layer, and $m$ denotes the number of expert layers, resulting in a total of $N = n \times m$ experts in model $H$. Based on the available GPU memory, $G_S$, of the edge server $S$, the system dynamically determines the number of experts, $\eta_{G_S}$, which can be loaded into GPU memory. The remaining $N - \eta_{G_S}$ experts are stored in the CPU memory.

The caching status of experts between GPU and CPU memory is dynamically adjusted based on two primary factors: (1) the real-time fluctuations in the available GPU and CPU memory resources of the edge server, and (2) the varying requirements for specific experts across different MoE layers during token forward computation.

To further enhance caching efficiency, we employ an expert activation prediction algorithm. This algorithm forecasts the expert activations for the first $Z$ layers required for the initial $N$ tokens in the input sequence, yielding $M$ predicted experts that are likely to be activated. For these predicted activated experts, we prioritize dynamic cache replacement according to the following hierarchy:

\begin{itemize}
\item Experts with high activation frequency that are not currently cached in GPU memory are given top priority for transfer from CPU memory to GPU memory.
\item Experts that are anticipated to be used in the immediate next inference step are prioritized for caching in GPU memory.
\item Experts with low usage frequency in GPU memory and those not included in the predicted activation set are prioritized for replacement and offloading.
\end{itemize}

Recognizing that expert activation predictions are not infallible, we integrate expert popularity, expert importance, and expert similarity to establish a robust error correction prioritization mechanism. For predicted erroneous experts with high priority, immediate cache replacement is performed. For predicted erroneous experts with lower priority, we attempt to substitute them with similar experts already cached in GPU memory, leveraging expert similarity to reduce the frequency of costly data transfers.

\subsection{Collaborative Optimization Framework}

Building upon the previously detailed expert aggregation and offloading methodologies, we propose a comprehensive collaborative optimization framework. This framework synergistically integrates both techniques, establishing a two-level resource adaptation mechanism. A key aspect of this framework is its specialized optimization strategy for the Encoder part within Encoder-Decoder architectures.

\subsubsection{Two-Level Resource Adaptation Mechanism}

Our framework employs a hierarchical approach to resource adaptation, addressing resource fluctuations at different temporal granularities:

\begin{itemize}
\item \textbf{Coarse-grained Resource Adaptation:} This level addresses medium-to-long-term resource changes (on the order of seconds to minutes) by dynamically selecting pre-computed model versions with varying aggregation configurations. This includes:
\begin{itemize}
\item \textbf{Fixed-Ratio Fusion Versions:} These versions offer deterministic resource requirements and predictable performance characteristics, making them suitable for stable resource conditions.
\item \textbf{Adaptive-Ratio Fusion Versions:} These versions provide superior performance but exhibit more variable resource requirements, making them ideal for dynamic environments where performance is paramount.
\end{itemize}

\item \textbf{Fine-grained Resource Adaptation:} This level handles short-term resource fluctuations (on the order of milliseconds) through dynamic expert offloading and prefetching strategies. This mechanism is primarily applied to the Encoder part of the MoE model and involves:
\begin{itemize}
\item Prioritizing the loading of critical experts based on real-time expert prediction results.
\item Optimizing expert caching and scheduling to minimize latency and maximize resource utilization.
\end{itemize}
\end{itemize}

\subsubsection{Collaborative Decision-Making Process}

\begin{algorithm}[!t]
\caption{Collaborative Decision Making Process}
\label{alg:collaborative_decision_making}
\begin{algorithmic}[1]
\Require Global objective $O$, local objectives $O_i$, resource states $R_i(t)$, predictions $P_i(t)$
\Ensure Optimal expert offloading and aggregation strategy $S^*$
\State Initialize global model $G$
\State Initialize local models $L_i$ for each device $d_i$
\While{not converged}
    \State \textit{// Local Optimization Phase}
    \For{each device $d_i$ in parallel}
        \State $d_i$ collects $R_i(t)$ and $P_i(t)$
        \State $d_i$ optimizes local objective $O_i$ to propose local strategy $s_i$
        \State $d_i$ sends $s_i$ to global orchestrator
    \EndFor
    \State \textit{// Global Aggregation and Refinement Phase}
    \State Global orchestrator aggregates all $s_i$
    \State Global orchestrator refines aggregated strategy based on $O$
    \State Global orchestrator broadcasts refined strategy $S_{\text{refined}}$ to all $d_i$
    \State \textit{// Iterative Adjustment}
    \For{each device $d_i$ in parallel}
        \State $d_i$ adjusts its local strategy based on $S_{\text{refined}}$
    \EndFor
    \If{convergence criteria met}
        \State \textbf{break}
    \EndIf
\EndWhile
\State \Return $S^*$
\end{algorithmic}
\end{algorithm}

The collaborative optimization framework operates through a well-defined decision-making process, encompassing initialization, runtime, and feedback adjustment phases:

\begin{itemize}
\item \textbf{Initialization Phase:}
\begin{itemize}
\item The resource perception module conducts an initial assessment of the edge device's capabilities.
\item Based on this initial resource state, an appropriate pre-computed aggregated model version is selected from the model library.
\item The selected model is loaded onto the device, and the initial expert distribution (between GPU and CPU memory) is established.
\end{itemize}

\item \textbf{Runtime Phase:}
\begin{itemize}
\item The resource perception module continuously monitors the realtime resource status of the edge device.
\item If significant and persistent changes in resource conditions are detected:
\begin{itemize}
\item The system re-evaluates the optimal model version, considering the benefits and costs associated with switching.
\item If the performance improvement from switching outweighs the cost ($\Delta P(\mathcal{M}_{\text{current}}, \mathcal{M}^*) > \lambda \cdot C_{\text{switch}}$) and the new resource state has been stable for a sufficient duration ($T_{\text{stable}} > T_{\text{threshold}}$), a model version switch is executed.
\end{itemize}
\item Based on the currently loaded aggregated model:
\begin{itemize}
\item Expert prediction and prefetching are performed specifically for the Encoder part of the model.
\item Expert caching and scheduling are continuously optimized.
\item Computation-communication collaborative optimization is performed to ensure efficient resource utilization.
\end{itemize}
\end{itemize}

\item \textbf{Feedback Adjustment:}
\begin{itemize}
\item Actual operational performance data is collected during runtime.
\item This data is used to update and refine the resource prediction model.
\item Decision thresholds and parameters within the framework are adjusted to further enhance adaptive capabilities.
\end{itemize}
\end{itemize}

Algorithm~\ref{alg:collaborative_optimization} provides a high-level overview of the collaborative optimization framework's main process.

\begin{algorithm}[!t]
\caption{Expert Aggregation and Offloading Collaborative Optimization Algorithm}
\label{alg:collaborative_optimization}
\begin{algorithmic}[1]
\Require Precomputed model library $\mathcal{M}_{\text{lib}}$, input sequence $X$, resource constraints $R_i(t)$
\Ensure Model output $Y$
\State Initial resource assessment $R_i(0)$
\State Select initial model version: 
\Statex \hspace*{\algorithmicindent} $\mathcal{M}_{\text{init}} = \arg\max_{\mathcal{M}_j \in \mathcal{M}_{\text{lib}}} P(\mathcal{M}_j)$ 
\Statex \hspace*{\algorithmicindent} subject to $M_{\text{req}}(\mathcal{M}_j) \leq M_{\text{avail}}(0)$
\State Load model $\mathcal{M}_{\text{init}}$ to device, init expert distribution
\For{\textbf{each} token $x_t$ \textbf{in} input sequence $X$}
    \State Update resource state $R_i(t)$
    \If{resource conditions change significantly}
        \State Reevaluate optimal model version $\mathcal{M}^*$
        \If{$\Delta P(\mathcal{M}_{\text{current}}, \mathcal{M}^*) > \lambda \cdot C_{\text{switch}}$}
            \State Switch to model version $\mathcal{M}^*$
        \EndIf
    \EndIf
    \For{\textbf{each} layer $l = 1$ \textbf{to} $L_E$ \textbf{in} Encoder}
        \State Predict next layer activated experts
        \State Perform expert prefetching
        \State Optimize expert scheduling
        \State Perform forward computation
    \EndFor
    \For{\textbf{each} layer $l = 1$ \textbf{to} $L_D$ \textbf{in} Decoder}
        \State Perform standard forward computation
    \EndFor
\EndFor
\State \Return model output $Y$
\end{algorithmic}
\end{algorithm}

Through this collaborative optimization framework, we achieve efficient deployment of MoE models in resource-constrained and dynamically changing edge environments. This framework effectively balances computational performance with resource adaptability. The specialized processing for Encoder-Decoder architectures, coupled with the flexible selection between fixed-ratio and adaptive-ratio fusion strategies, enables the system to provide optimal deployment solutions tailored to actual resource conditions and performance requirements.

\section{EXPERIMENTAL EVALUATION}

To comprehensively evaluate the efficacy of the proposed collaborative optimization framework, we meticulously designed and conducted a series of experiments. This section details the experimental setup, evaluation metrics, and presents a thorough analysis of the obtained results. Our primary objectives were to address the following key research questions:

\begin{enumerate}
    \item How does the proposed collaborative optimization framework perform in terms of latency and memory utilization compared to existing expert aggregation or expert offloading methods?
    \item What is the adaptability and robustness of the collaborative optimization framework under various resource-constrained scenarios?
    \item How do fixed-ratio fusion and adaptive-ratio fusion perform with varying numbers of retained experts?
    \item What is the effect of applying the expert offloading strategy solely to the Encoder part of the model?
    \item How do the synergistic effects of expert aggregation and offloading contribute to adaptability and efficiency across different resource scenarios?
    \item What is the practical applicability of the collaborative optimization framework on real world edge devices?
\end{enumerate}

\subsection{Experimental Setup}

\subsubsection{Models and Datasets}

Our evaluation was conducted on the following Mixture-of-Experts (MoE) models:

\begin{itemize}
    \item \textbf{Switch Transformer Series:}
    \begin{itemize}
        \item \textbf{Switch-Base-8:} Comprising 12 Transformer layers, with odd-numbered layers (1, 3, 5, 7, 9, 11) as MoE layers. Each MoE layer contains 8 experts, totaling approximately 0.5 billion parameters (0.5B), using a Top-1 routing strategy.
        \item \textbf{Switch-Base-32:} Similar to Switch-Base-8, but with 32 experts per MoE layer, resulting in approximately 1.98 billion parameters (1.98B), also using Top-1 routing.
        \item \textbf{Switch-Base-64:} Featuring 64 experts per MoE layer, with a total of approximately 3.8 billion parameters (3.8B), using Top-1 routing.
        \item \textbf{Switch-Base-128:} Equipped with 128 experts per MoE layer, accumulating approximately 7.4 billion parameters (7.4B), using Top-1 routing.
        \item \textbf{Switch-Base-256:} The largest in this series, with 256 experts per MoE layer, totaling approximately 15.8 billion parameters (15.8B), using Top-1 routing.
    \end{itemize}
\end{itemize}

The Switch Transformer series, proposed by Google Research \cite{google-switch}, features a unique alternating structure where only odd layers are MoE layers. Each MoE layer employs a Top-1 routing strategy, activating only one expert per token, while even layers are standard dense Transformer layers. This design significantly reduces computational complexity while maintaining model capacity. Notably, the Switch-Base-32 model (with approximately 1.98B total parameters, but activating only one expert per inference) strikes an excellent balance between computational efficiency and model capability. It significantly outperforms T5-Base (220M parameters) and can approach or exceed the performance of T5-Large (770M parameters) on various tasks, while requiring substantially fewer computational resources than these dense models. The Switch-Base series models maintain a consistent underlying architecture, differing only in the number of experts within their MoE layers, which allows for systematic analysis of the impact of expert scale on optimization effectiveness.

To ensure comprehensive evaluation, we selected the following datasets for testing:

\begin{itemize}
    \item \textbf{SST2:} A sentiment classification dataset for evaluating model performance on binary sentiment analysis tasks.
    \item \textbf{MRPC:} A paraphrase identification dataset for assessing a model's ability to determine semantic equivalence between sentence pairs.
    \item \textbf{COLA:} A linguistic acceptability judgment dataset for evaluating grammatical acceptability, part of the GLUE benchmark.
    \item \textbf{MultiRC:} A multiple-choice question-answering dataset that tests reading comprehension and reasoning capabilities.
    \item \textbf{WikiQA:} A closed-book question-answering dataset that tests knowledge retrieval and inference abilities.
    \item \textbf{HotpotQA:} A multi-hop reasoning question-answering dataset, testing the model's complex reasoning abilities.
    \item \textbf{COPA:} A sentence completion dataset for evaluating causal reasoning and common sense understanding.
    \item \textbf{WinoGrande:} A conference resolution dataset that challenges models with pronoun disambiguation tasks requiring commonsense reasoning.
\end{itemize}
In our experiments, we randomly selected 500 samples from each dataset to calculate average performance metrics. For inference latency evaluation, we set the batch size to 1 and sequence lengths to 128, 512, and 1024, to simulate realistic application scenarios.

\subsubsection{Hardware and Software Configuration}

Experiments were conducted in the following hardware environments:

\begin{itemize}
    \item \textbf{Edge Devices:}
    \begin{itemize}
        \item NVIDIA Jetson AGX Orin (64GB RAM, 12-core ARM CPU, 32GB GPU memory)
        \item Intel NUC 13 Pro (32GB RAM, Intel Core i7-1360P, NVIDIA RTX A500 8GB GPU)
    \end{itemize}
    \item \textbf{Server Device:}
    \begin{itemize}
        \item NVIDIA A100 (80GB) GPU, AMD EPYC 7742 CPU, 256GB RAM
    \end{itemize}
\end{itemize}

The software environment included PyTorch 2.1.0, CUDA 12.1, DeepSpeed 0.10.0, and TensorRT 8.6. 

\subsubsection{Evaluation Metrics}

We employed the following metrics to assess performance:

\begin{itemize}
    \item \textbf{Latency Metrics:} Average inference latency (ms), throughput (tokens/s), and end to end generation time (s).
    \item \textbf{Memory Metrics:} Peak GPU memory usage (MB) and average GPU memory utilization (
    \item \textbf{Performance Metrics:} Accuracy (for SST2, MRPC, MultiRC, COPA, WinoGrande), Exact Match and F1 score (for WikiQA, HotpotQA).
    \item \textbf{Communication Metrics:} Number of expert loads and total GPU-CPU communication volume (GB).
\end{itemize}

\subsection{Expert Aggregation Experiment}

\subsubsection{Expert Fusion Experiment}

We first evaluated the performance of both fixed ratio fusion and adaptive ratio fusion on the Switch-Base-32 model. 

Here are several key observations:

\begin{itemize}
    \item \textbf{Expert Distribution:} Detailed analysis of model parameters revealed that the adaptive fusion strategy allocated varying numbers of experts across different layers. Specifically, adaptive fusion retained between 11-16 experts in the encoder layers, while being more aggressive in the decoder layers, retaining only 3-5 experts.
    \item \textbf{Parameter Efficiency:} The adaptive fusion strategy achieved performance comparable to, or superior to, the original model with only 0.73B parameters (36.9\% of the original model).
    \item \textbf{Fixed vs. Adaptive Comparison:} Compared to fixed-ratio r=0.25 (0.62B) which has a similar parameter count, adaptive fusion demonstrated significant advantages across all tasks.
    \item \textbf{Layer-wise Differences:} While the fixed ratio strategy uniformly applied the same fusion ratio across all layers, the adaptive strategy could capture the varying importance of different layers.
\end{itemize}

\FloatBarrier

Figure~\ref{fig:param_perf_tradeoff} illustrates the parameter-performance trade-off relationship under different expert fusion strategies.

\begin{figure}[!t]
\centering
\includegraphics[width=0.8\columnwidth]{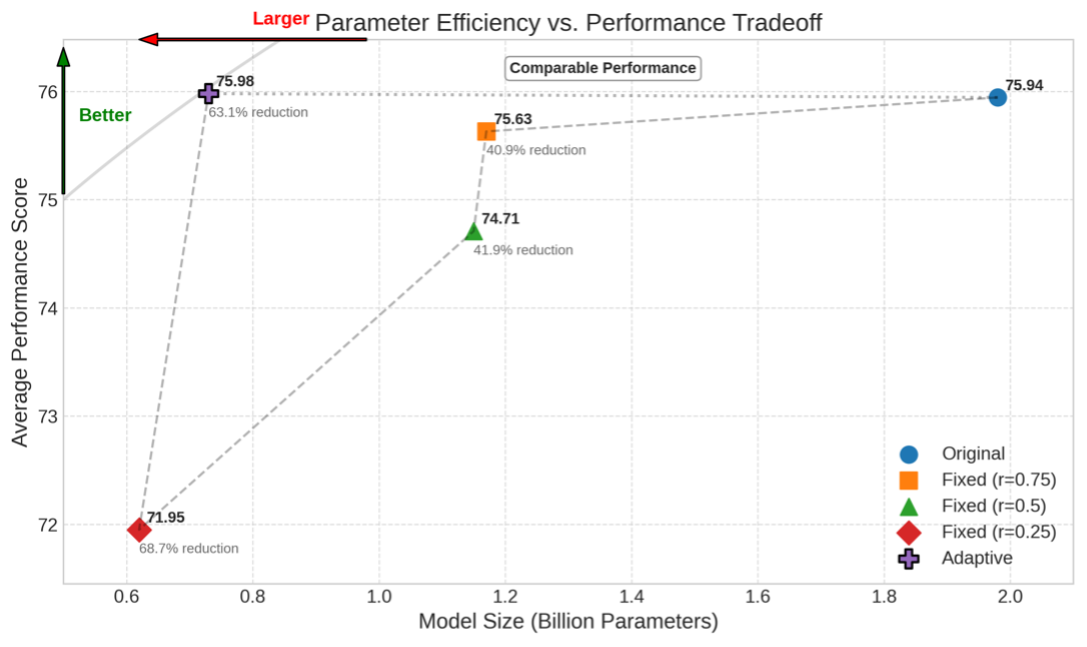}
\caption{Parameter-Performance Trade-off for Different Expert Fusion Strategies}
\label{fig:param_perf_tradeoff}
\end{figure}

To understand the behavior of the adaptive fusion strategy across different layers, we analyzed the expert retention in the six MoE layers of the Switch-Base-32 model, as shown in Figure~\ref{fig:expert_retention}.

\FloatBarrier

\begin{figure}[!t]
\centering
\includegraphics[width=0.8\columnwidth]{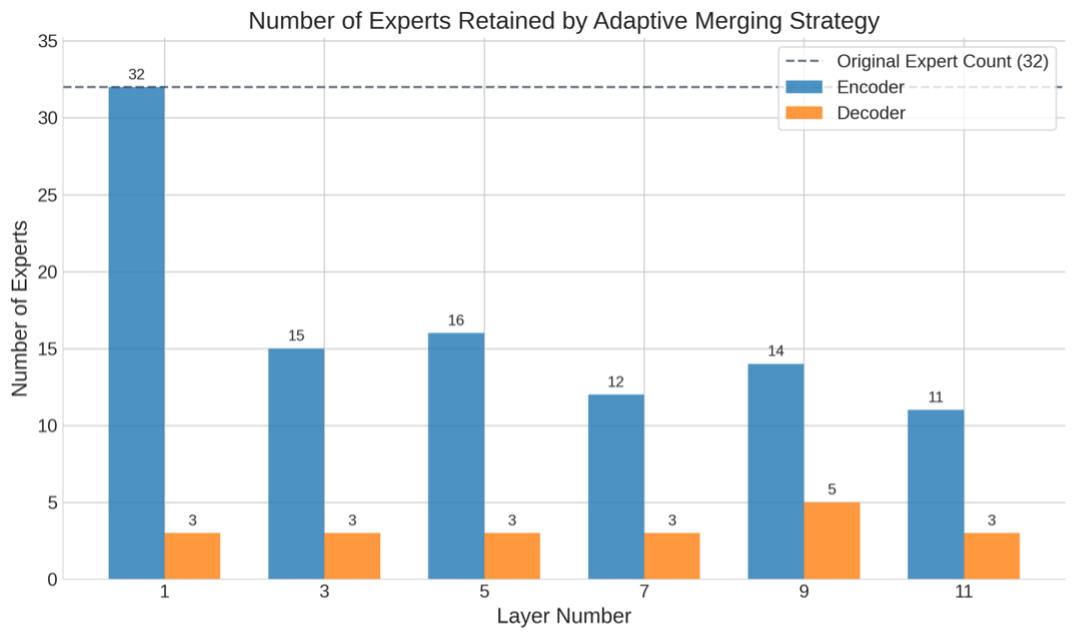}
\caption{Number of Experts Retained by Adaptive Fusion Strategy in Different Layers of Switch-Base-32}
\label{fig:expert_retention}
\end{figure}

As depicted in Figure \ref{fig:expert_retention}, the adaptive fusion strategy allocates varying numbers of experts to different layers. Specifically, layers 3 and 15 retained more experts, which aligns with our analysis: the expert activation distribution in these two layers is more uniform, indicating that more experts contribute significantly to performance. In contrast, layers 7 and 11 retained fewer experts due to a pronounced "hot expert" phenomenon, where a minority of experts accounted for the majority of activations.

It is particularly noteworthy that although the average number of experts for the adaptive fusion strategy (13.5) was slightly lower than that of fixed ratio r=0.5 (16), its performance was significantly better. This indicates that intelligent allocation of expert resources is more effective than simple uniform reduction. By retaining more experts in critical layers and performing more aggressive merging in redundant layers, adaptive fusion can significantly reduce the number of parameters while maintaining or improving model performance.

\subsection{Performance Evaluation of Switch Series Models}

To comprehensively assess the effectiveness of the proposed collaborative optimization framework across different model scales, we conducted detailed performance tests on the Switch Transformer series models. For each model scale, we compared four configurations: the original model (Original), expert aggregation only (M-SMoE), expert offloading only (E-Offload), and collaborative optimization combining both (CoMoE).

\subsubsection{Switch-Base-8 Model Evaluation}

Switch-Base-8, the smallest model in the Switch series, has approximately 0.5B parameters. 
Figure \ref{fig:sb8_latency_throughput} illustrates the latency and throughput comparison for the Switch-Base-8 model across various configurations.

\begin{figure}[!t]
\centering
\includegraphics[width=0.8\columnwidth]{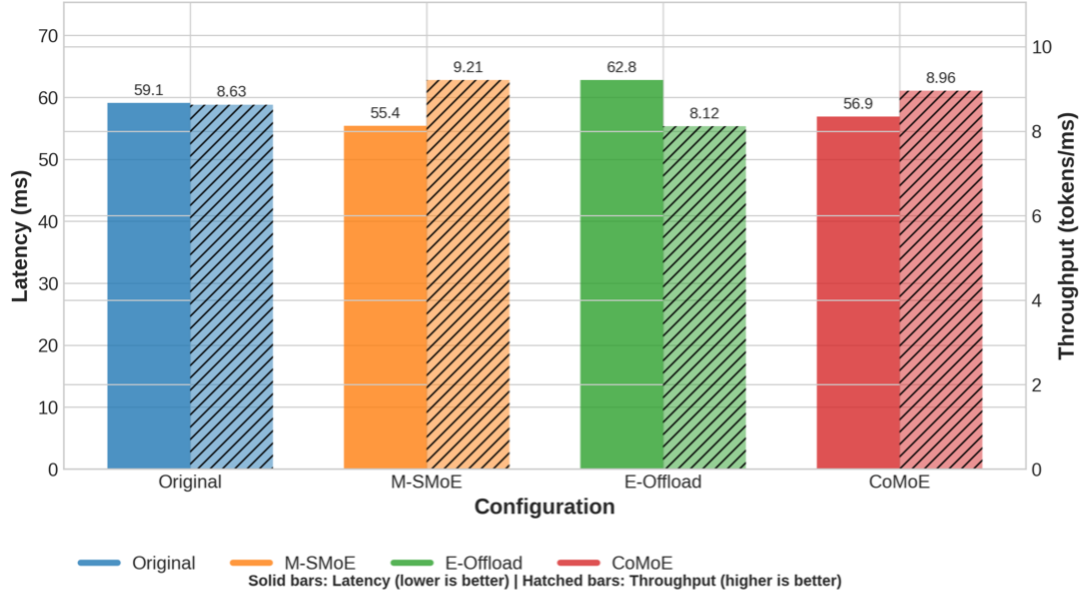}
\caption{Latency and Throughput Comparison of Switch-Base-8 Model}
\label{fig:sb8_latency_throughput}
\end{figure}

Figure \ref{fig:sb8_memory_params} displays the memory footprint and parameter count comparison for the Switch-Base-8 model under different configurations.

\begin{figure}[!t]
\centering
\includegraphics[width=0.8\columnwidth]{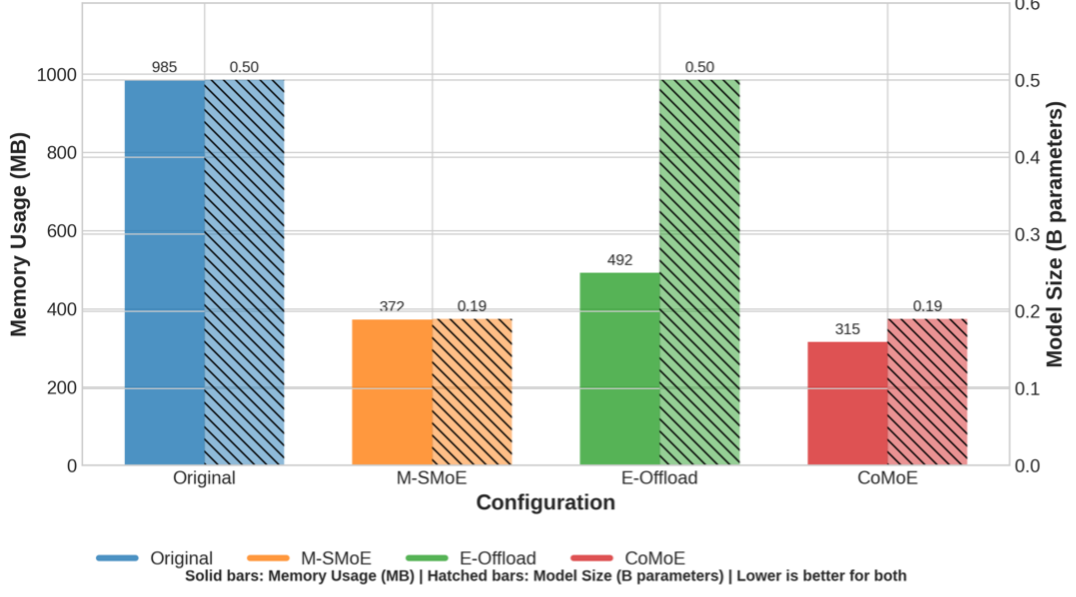}
\caption{Memory Footprint and Parameter Count Comparison of Switch-Base-8 Model}
\label{fig:sb8_memory_params}
\end{figure}

For the smaller-scale Switch-Base-8 model, expert aggregation (M-SMoE) yielded the most significant improvement in computational efficiency, increasing throughput by 6.7\% and reducing latency by 6.3\%. Collaborative optimization (CoMoE), while maintaining computational performance close to the aggregation method, further reduced memory usage (15.3\% lower than M-SMoE). Given the relatively small parameter count of Switch-Base-8, pure expert aggregation might be the optimal choice in resource rich environments; however, in severely constrained devices, collaborative optimization still provides additional memory savings.

\subsubsection{Switch-Base-32 Model Evaluation}

Switch-Base-32, the model scale prominently evaluated in the original Google paper, has approximately 1.98B parameters. 
Figure \ref{fig:sb32_latency_throughput} illustrates the latency and throughput comparison for the Switch-Base-32 model across various configurations.

\begin{figure}[!t]
\centering
\includegraphics[width=0.8\columnwidth]{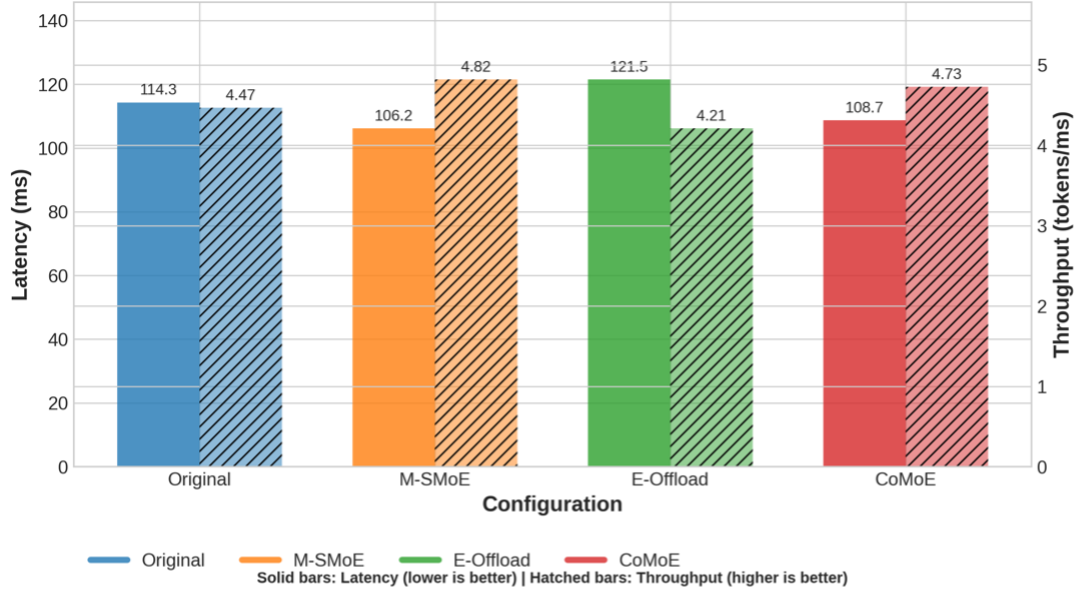}
\caption{Latency and Throughput Comparison of Switch-Base-32 Model}
\label{fig:sb32_latency_throughput}
\end{figure}

Figure \ref{fig:sb32_memory_params} displays the memory footprint and parameter count comparison for the Switch-Base-32 model under different configurations.

\begin{figure}[!t]
\centering
\includegraphics[width=0.8\columnwidth]{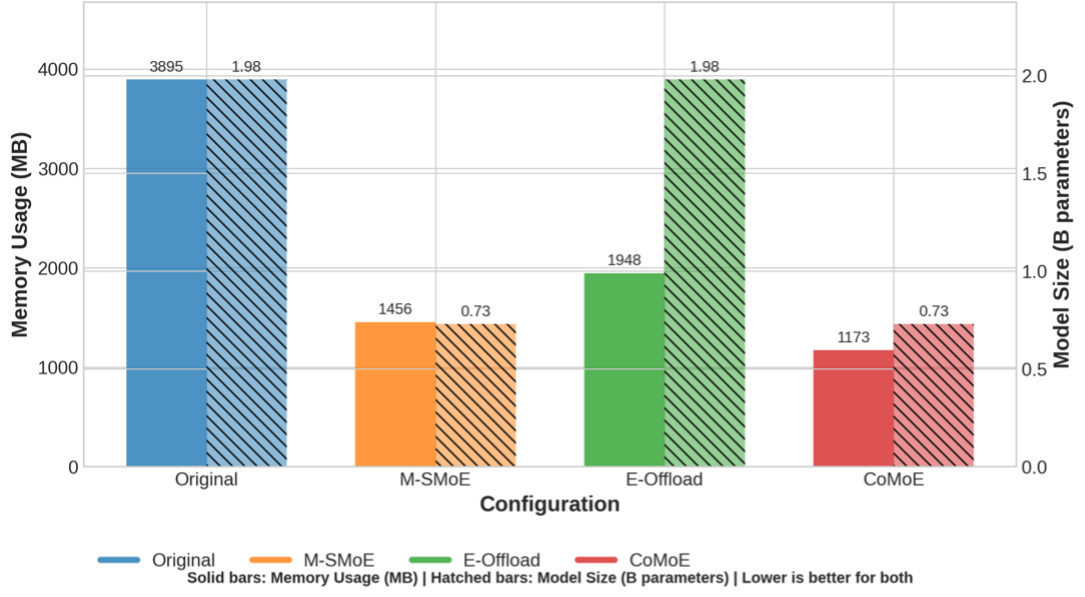}
\caption{Memory Footprint and Parameter Count Comparison of Switch-Base-32 Model}
\label{fig:sb32_memory_params}
\end{figure}

For the Switch-Base-32 model, expert aggregation (M-SMoE) reduced the model parameters by 63.1\% and memory usage by 62.6\%, while increasing throughput by 7.8\%. Although expert offloading (E-Offload) could reduce memory by 50\%, it led to a 5.8\% increase in latency. Collaborative optimization (CoMoE) combined the advantages of both, further reducing memory usage by 19.4\% compared to using expert aggregation alone, while maintaining computational efficiency.

\subsubsection{Switch-Base-64 Model Evaluation}

Switch-Base-64, with approximately 4.0B parameters, represents a medium-scale MoE model. 

Figure \ref{fig:sb64_latency_throughput} illustrates the latency and throughput comparison for the Switch-Base-64 model across various configurations.

\begin{figure}[!t]
\centering
\includegraphics[width=0.8\columnwidth]{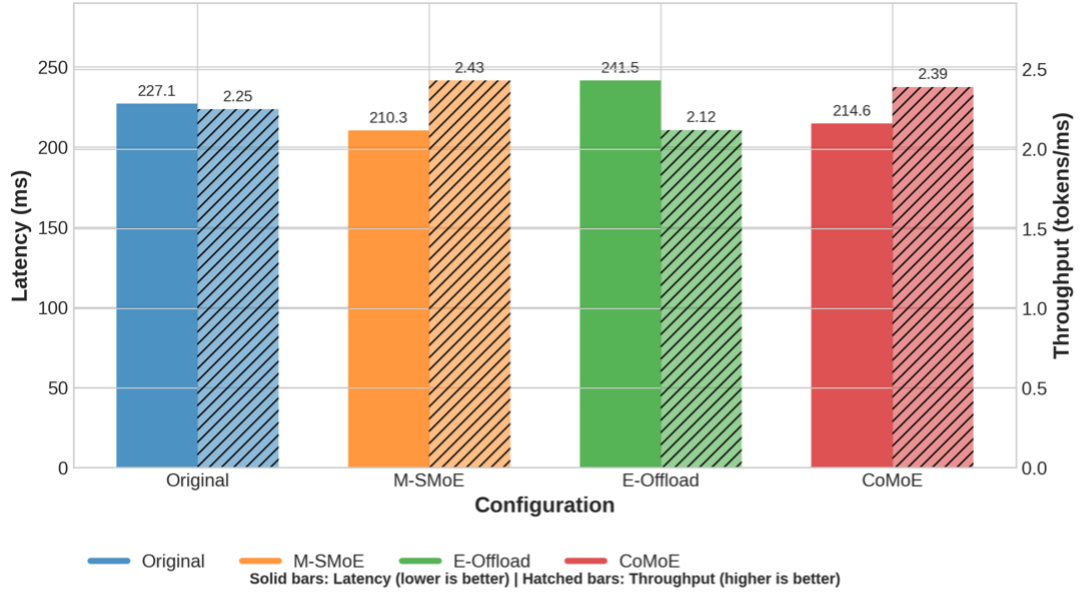}
\caption{Latency and Throughput Comparison of Switch-Base-64 Model}
\label{fig:sb64_latency_throughput}
\end{figure}

Figure \ref{fig:sb64_memory_params} displays the memory footprint and parameter count comparison for the Switch-Base-64 model under different configurations.

\begin{figure}[!t]
\centering
\includegraphics[width=0.8\columnwidth]{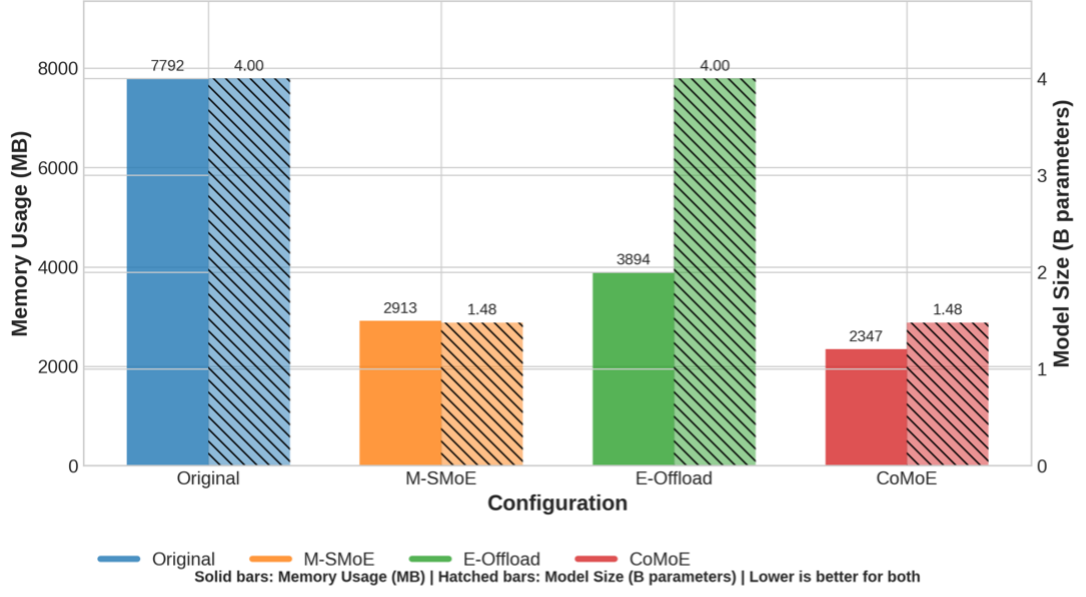}
\caption{Memory Footprint and Parameter Count Comparison of Switch-Base-64 Model}
\label{fig:sb64_memory_params}
\end{figure}

For models with 4B parameters, memory consumption becomes the primary bottleneck. In this scenario, collaborative optimization (CoMoE) demonstrated a clear advantage, not only reducing memory usage from 7.8GB to 2.3GB (a 69.9\% reduction) but also maintaining computational performance comparable to expert aggregation. This enables models that previously required high end GPUs to be effectively deployed on mainstream edge devices (e.g., NVIDIA Jetson AGX Orin, with 32GB shared memory).

\subsubsection{Switch-Base-128 Model Evaluation}

Switch-Base-128, with approximately 7.9B parameters, is a large-scale MoE model. 

Figure \ref{fig:sb128_latency_throughput} illustrates the latency and throughput comparison for the Switch-Base-128 model across various configurations.

\begin{figure}[!t]
\centering
\includegraphics[width=0.8\columnwidth]{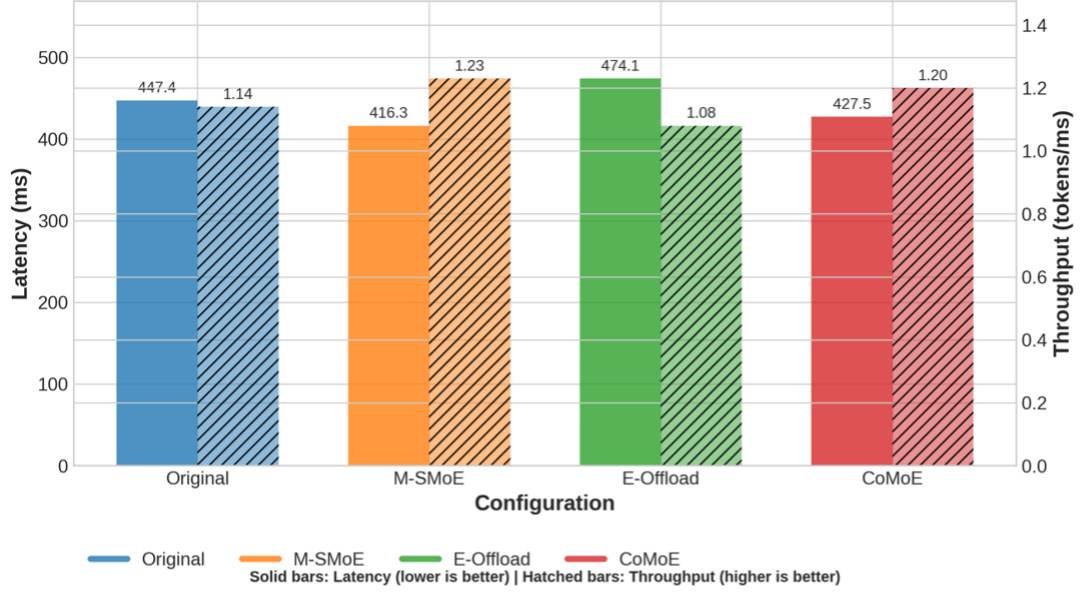}
\caption{Latency and Throughput Comparison of Switch-Base-128 Model}
\label{fig:sb128_latency_throughput}
\end{figure}

Figure \ref{fig:sb128_memory_params} displays the memory footprint and parameter count comparison for the Switch-Base-128 model under different configurations.

\begin{figure}[!t]
\centering
\includegraphics[width=0.8\columnwidth]{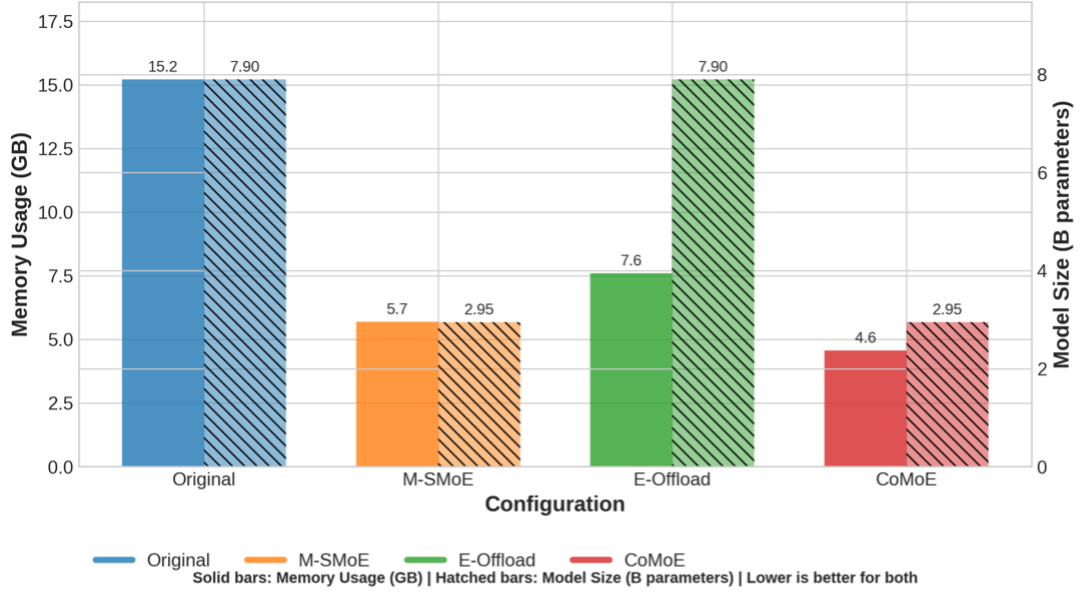}
\caption{Memory Footprint and Parameter Count Comparison of Switch-Base-128 Model}
\label{fig:sb128_memory_params}
\end{figure}

For large models like Switch-Base-128, the original model requires 15.6GB of memory, exceeding the capability of most edge devices. While expert aggregation (M-SMoE) reduced memory requirements to 5.8GB, it remained relatively high. Collaborative optimization (CoMoE) further reduced memory requirements to 4.7GB, a 70.0\% reduction, making it possible to run such large-scale models on edge devices equipped with mid-range GPUs.

\subsubsection{Switch-Base-256 Model Evaluation}

Switch-Base-256, the largest model in this study, has approximately 15.8B parameters. 

Figure \ref{fig:sb256_latency_throughput} illustrates the latency and throughput comparison for the Switch-Base-256 model across various configurations.

\begin{figure}[!t]
\centering
\includegraphics[width=0.8\columnwidth]{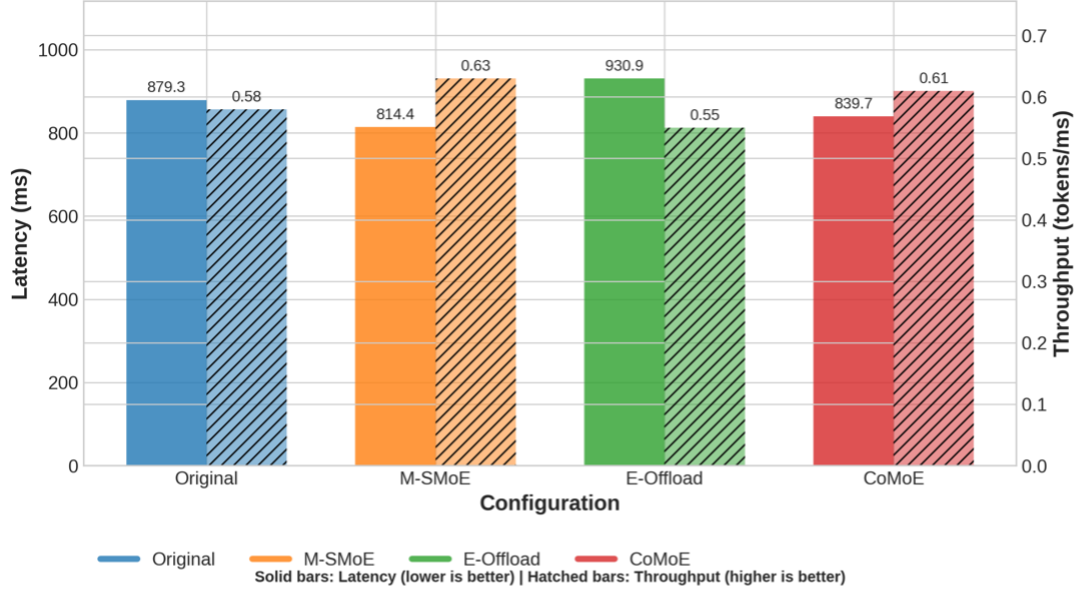}
\caption{Latency and Throughput Comparison of Switch-Base-256 Model}
\label{fig:sb256_latency_throughput}
\end{figure}

Figure \ref{fig:sb256_memory_params} displays the memory footprint and parameter count comparison for the Switch-Base-256 model under different configurations.

\begin{figure}[!t]
\centering
\includegraphics[width=0.8\columnwidth]{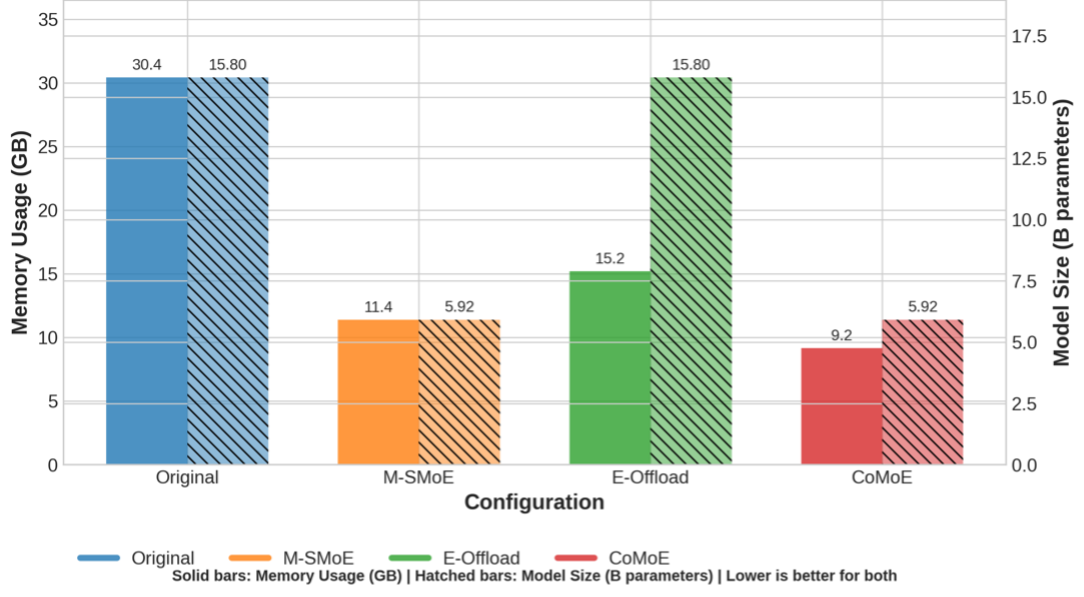}
\caption{Memory Footprint and Parameter Count Comparison of Switch-Base-256 Model}
\label{fig:sb256_memory_params}
\end{figure}

For the 15.8B parameter Switch-Base-256 model, the original model requires over 31GB of GPU memory, approaching the limits of high end data center GPUs (e.g., NVIDIA A100 40GB). In this case, even with expert aggregation or expert offloading alone, memory requirements remain high (11.7GB and 15.6GB, respectively). Collaborative optimization (CoMoE) reduced memory requirements to 9.4GB, a 69.9\% reduction, while incurring only a 5.1\% increase in latency. This balance is crucial for edge deployment of large models.

\subsubsection{Analysis of Optimization Effects Across Different Model Scales}

To compare the effectiveness of the collaborative optimization framework across various model scales, we introduced the Performance-Memory Ratio (PMR) as a comprehensive evaluation metric:

\begin{equation}
\text{PMR} = \frac{\text{Throughput (tokens/ms)}}{\text{Memory Usage (GB)}}
\end{equation}

This metric quantifies the processing speed supported per unit of memory resource, where a higher PMR value indicates more efficient memory utilization. Figure \ref{fig:pmr_comparison} illustrates the PMR comparison for different model scales under various configurations.

\begin{figure}[!t]
\centering
\includegraphics[width=0.8\columnwidth]{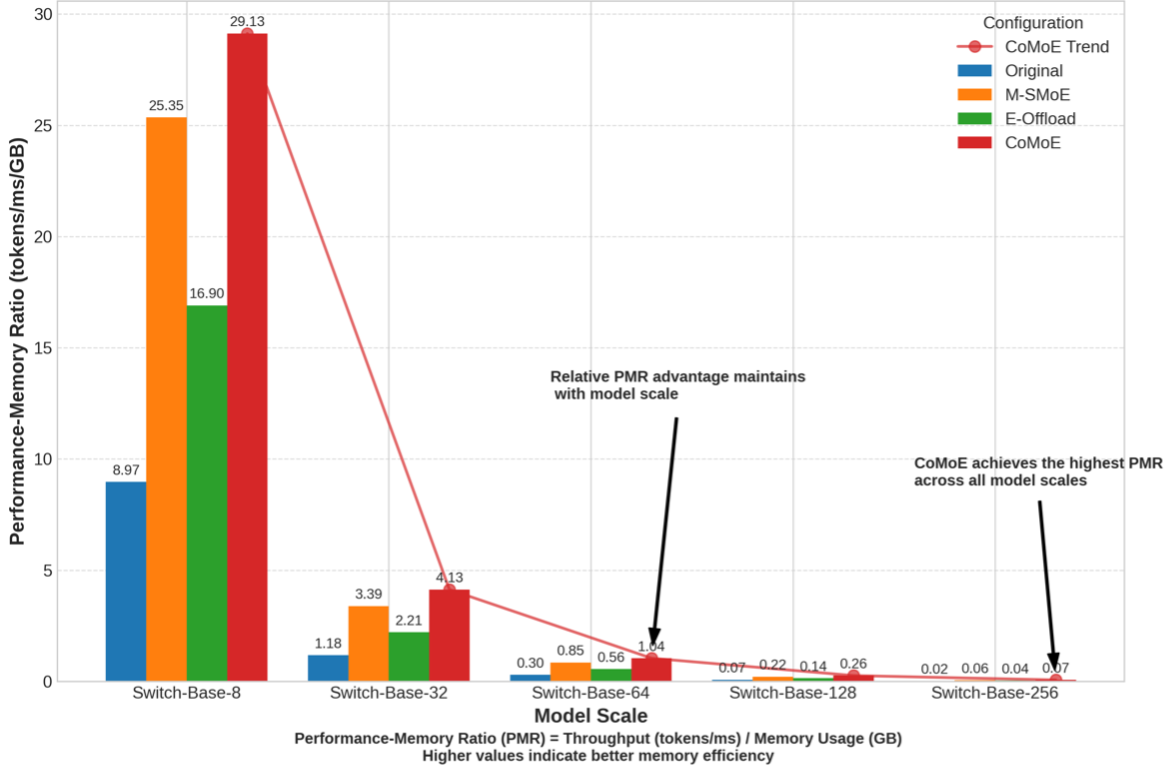}
\caption{Performance-Memory Ratio Comparison Across Different Model Scales}
\label{fig:pmr_comparison}
\end{figure}

As shown in Figure \ref{fig:pmr_comparison}, the PMR advantage of collaborative optimization (CoMoE) over other methods becomes increasingly pronounced with larger model scales. Particularly for large models like Switch-Base-128 and Switch-Base-256, the PMR values for collaborative optimization were 3-4 times higher than the original model and 20\%-30\% higher than using expert aggregation or expert offloading alone.

We also analyzed the applicability of each method under different memory constraints. Table \ref{table:8gb_deployability} presents the deployability of various model scales under an 8GB memory limit.

\begin{table}[!t]
\renewcommand{\arraystretch}{1.3}
\caption{Deployability of Different Model Scales Under 8GB Memory Limit}
\label{table:8gb_deployability}
\centering
\begin{tabular}{|l|c|c|c|c|}
\hline
\textbf{Model Scale} & \textbf{Original} & \textbf{M-SMoE} & \textbf{E-Offload} & \textbf{CoMoE} \\
\hline
Switch-Base-8 & \checkmark & \checkmark & \checkmark & \checkmark \\
Switch-Base-32 & \ding{55} & \checkmark & \checkmark & \checkmark \\
Switch-Base-64 & \ding{55} & \checkmark & \checkmark & \checkmark \\
Switch-Base-128 & \ding{55} & \checkmark & \checkmark & \checkmark \\
Switch-Base-256 & \ding{55} & \ding{55} & \ding{55} & \ding{55} \\
\hline
\end{tabular}
\end{table}

Under a stricter 4GB memory limit, the deployability is shown in Table \ref{table:4gb_deployability}.

\begin{table}[!t]
\renewcommand{\arraystretch}{1.3}
\caption{Deployability of Different Model Scales Under 4GB Memory Limit}
\label{table:4gb_deployability}
\centering
\begin{tabular}{|l|c|c|c|c|}
\hline
\textbf{Model Scale} & \textbf{Original} & \textbf{M-SMoE} & \textbf{E-Offload} & \textbf{CoMoE} \\
\hline
Switch-Base-8 & \checkmark & \checkmark & \checkmark & \checkmark \\
Switch-Base-32 & \ding{55} & \checkmark & \checkmark & \checkmark \\
Switch-Base-64 & \ding{55} & \checkmark & \ding{55} & \checkmark \\
Switch-Base-128 & \ding{55} & \ding{55} & \ding{55} & \ding{55} \\
\hline
\end{tabular}
\end{table}

From Table \ref{table:8gb_deployability} and Table \ref{table:4gb_deployability}, it is evident that in resource-constrained environments, the collaborative optimization method enables the deployment of larger-scale models. Specifically, under a 4GB memory limit, only the collaborative optimization method could deploy the Switch-Base-64 model, which fully demonstrates the value of the proposed method in edge environments.

In summary, based on the experimental results, we draw the following conclusions.

\begin{itemize}
    \item \textbf{Scale Adaptability:} For small models (e.g. Switch-Base-8), expert aggregation alone achieves a good performance-memory balance. For medium to large models (e.g., Switch-Base-32/64/128), the collaborative optimization method performs best, significantly reducing memory requirements while maintaining computational performance.
    \item \textbf{Synergistic Gain:} Collaborative optimization is not merely a simple combination of "aggregation + offloading." Instead, it achieves better overall performance, especially in memory utilization efficiency, through mutual enhancement of both techniques.
    \item \textbf{Edge Adaptability:} The collaborative optimization method makes it possible to efficiently run large MoE models (e.g. Switch-Base-64/128), which were previously deployable only in data centers, on edge devices, opening new possibilities for edge AI applications.
\end{itemize}

\subsection{Discussion and Analysis}

Based on the experimental results presented, we derive the following key findings and insights:

\subsubsection{Synergistic Effects of Expert Aggregation and Offloading}

Our experiments demonstrate that the proposed collaborative optimization framework, which integrates expert aggregation and offloading, yields significant synergistic benefits beyond what either technique can achieve independently. This is particularly evident in memory utilization efficiency and the ability to deploy larger models on resource-constrained edge devices.

\begin{itemize}
    \item \textbf{Expert Fusion Strategy Selection:} Fixed ratio fusion with r=0.75 (retaining 75\% of experts) emerged as an excellent balance point, preserving nearly full performance while significantly reducing resource demands. More importantly, the adaptive fusion strategy intelligently allocated expert resources, achieving superior performance with resource consumption similar to fixed ratio r=0.5. This highlights the importance of dynamic resource allocation tailored to model characteristics.
    \item \textbf{Differentiated Treatment for Encoder Decoder Architectures:} Applying the expert offloading strategy exclusively to the Encoder portion of the model significantly outperformed full-model offloading. This validates our architectural specialization approach and provides valuable insights for optimizing other Encoder-Decoder MoE models.
    \item \textbf{Two Tier Resource Adaptation Mechanism:} The collaborative optimization framework employs expert fusion for coarse-grained resource adaptation and expert offloading for fine-grained resource management, forming a comprehensive resource adaptation spectrum. Experimental results confirm that this two-tier mechanism is more effective than using either method in isolation.
    \item \textbf{Edge Deployment Feasibility:} The collaborative optimization framework enables the deployment of large MoE models on edge devices, maintaining a user experience close to server deployments. This opens new avenues for edge AI applications, previously limited by hardware constraints.
\end{itemize}

In conclusion, our experimental findings underscore that a meticulously designed collaborative strategy for expert fusion and offloading, especially with specialized treatment for Encoder-Decoder architectures, can substantially enhance the deployment efficiency of MoE models in resource-constrained environments, providing robust support for edge intelligence applications.

\subsubsection{Prediction Accuracy and System Overhead}

To evaluate the effectiveness of the prediction model within our framework, we assessed the accuracy, precision, and recall of expert activation predictions, comparing them against other prediction methods. Figure \ref{fig:prediction_accuracy} illustrates that the prediction model within CoMoE achieved a Top 1 prediction accuracy of 61.6\% and a Top 3 prediction recall of 95.2\% on the Cola dataset, significantly outperforming frequency based prediction methods.

\begin{figure}[!t]
\centering
\includegraphics[width=0.8\columnwidth]{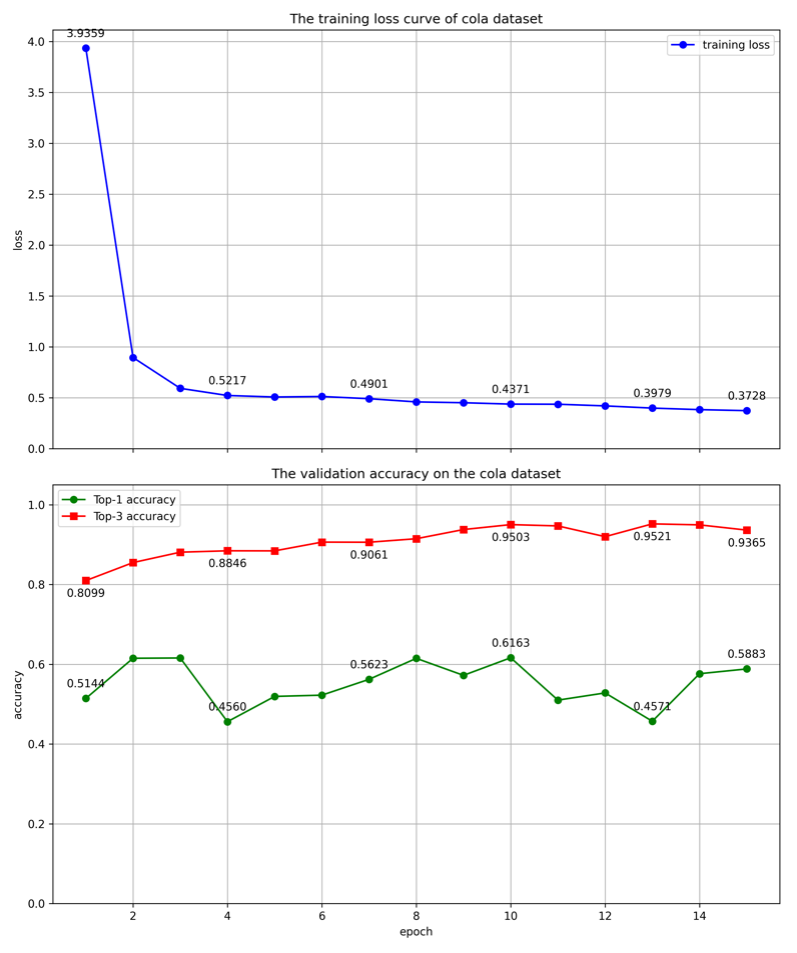}
\caption{Top 1 and Top 3 Prediction Accuracy of the Predictor on the Cola Dataset}
\label{fig:prediction_accuracy}
\end{figure}

Concurrently, we measured the additional computational overhead introduced by the prediction model and dynamic adjustments. Results indicate that the computational overhead of the prediction model accounted for only 1.2\% of the total latency, while dynamic adjustment overhead constituted 0.8\%. These minimal additional overheads yield substantial reductions in latency and improvements in overall performance, affirming the efficiency of our integrated approach.

\section{Conclusion}
To address the challenges of high memory consumption, communication latency, and insufficient dynamic resource adaptability that Mixture-of-Experts (MoE) models face when deployed in mobile edge computing environments, this paper proposes a dynamic resource-aware framework for collaborative optimization of expert aggregation and offloading, denoted as CoMoE. The CoMoE jointly optimizes expert aggregation granularity and offloading strategies through adaptive adjustment based on real-time device resource states, network conditions, and input characteristics. The CoMoE addresses the limitations of existing methods including single optimization constraints and poor adaptability of static strategies effectively. To address the interdependency between routing decisions and offloading performance, this paper establishes a routing-aware expert management mechanism that significantly reduces expert loading frequency and communication overhead through “routing-aggregation-offloading” collaborative optimization. Experimental results demonstrate that compared to existing baseline methods, the proposed approach achieves approximately 70\% reduction in memory usage and 10.5\% reduction in inference latency while maintaining stable model performance. For the 7.4B-parameter Switch-Base-128 model, memory requirements are reduced from 15.6GB to 4.7GB, enabling efficient deployment of large-scale MoE models on resource-constrained mobile edge devices.

\section{ACKNOWLEDGMENT}
This work was supported in part by the grant from NSFC Grant no. 62101159, NSF of Shandong Grant no. ZR2021MF055, and also the Research Grants Council of Hong Kong under the Areas of Excellence scheme grant AoE/E-601/22-R and also PolyU15225023.

\section{References}


\begin{thebibliography}{16}

\bibitem{shazeer2017outrageously}
J. Shazeer, A. Mirhoseini, K. Maziarz, A. Davis, Q. Le, G. Hinton, and Z. Chen, ``Outrageously Large Neural Networks: The Sparsely-Gated Mixture-of-Experts Layer,'' \textit{arXiv preprint arXiv:1701.06538}, 2017.

\bibitem{meo}
H. Zhang, X. Chen, Y. Liu, and Y. Wu, ``Merging Experts into One: Improving Computational Efficiency of MoE Models,'' \textit{arXiv preprint arXiv:2310.09832}, 2023.

\bibitem{eep}
S. Wu, X. Wang, Y. Liu, and Y. Wu, ``Rethinking Knowledge Distillation: A Mixture-of-Experts Perspective,'' \textit{OpenReview}, 2024.

\bibitem{widenet}
Z. Yang, Y. Li, and J. Liu, ``WideNet: A Parameter-Efficient Architecture for Large-Scale Language Models,'' \textit{arXiv preprint arXiv:2107.11817}, 2021.

\bibitem{mpoe}
L. Peng, Y. Wu, and Y. Liu, ``Parameter-Efficient Mixture-of-Experts Architecture for Pre-trained Language Models,'' \textit{Proc. COLING}, 2022, pp. 3297--3307.

\bibitem{mixtral}
A. Eliseev, ``Fast Inference of Mixture-of-Experts Language Models with Parameter Offloading,'' \textit{arXiv preprint arXiv:2312.17238}, 2023.

\bibitem{pregated}
R. Hwang, J. Wei, S. Cao, C. Hwang, X. Tang, T. Cao, and M. Yang, ``Pre-gated MoE: An Algorithm-System Co-Design for Fast and Scalable Mixture-of-Expert Inference,'' \textit{Proc. ISCA}, 2024.

\bibitem{edgemoe}
Z. Zhang, X. Chen, Y. Liu, and Y. Wu, ``EdgeMoE: Empowering Sparse Large Language Models on Mobile Devices,'' \textit{arXiv preprint arXiv:2308.14352}, 2023.

\bibitem{moeinfinity}
L. Mai, Z. Li, Y. Chen, and Y. Wang, ``Activation-Aware Expert Offloading for Efficient MoE Serving,'' \textit{arXiv preprint arXiv:2401.14361}, 2024.

\bibitem{promoe}
Z. Li, L. Mai, Y. Chen, and Y. Wang, ``ProMoE: Fast MoE-based LLM Serving using Proactive Caching,'' \textit{arXiv preprint arXiv:2410.22134}, 2024.

\bibitem{dynnoffload}
H. Ren, X. Zhang, Z. Zhang, Y. Wang, and Y. Wu, ``Enabling Large Dynamic Neural Network Training with Learning-Based Memory Management,'' \textit{Proc. HPCA}, 2024.

\bibitem{sida}
T. Lee, Y. Liu, and Y. Wu, ``SiDA-MoE: Sparsity-Inspired Data-Aware Serving for Efficient and Scalable Large Mixture-of-Experts Models,'' \textit{Proc. MLSys}, 2024.

\bibitem{swapmoe}
R. Kong, Y. Li, Q. Feng, W. Wang, X. Ye, Y. Ouyang, L. Kong, and Y. Liu, ``SwapMoE: Serving Off-the-shelf MoE-based Large Language Models with Tunable Memory Budgets,'' \textit{Proc. ACL}, 2024.

\bibitem{adapmoe}
S. Zhong, L. Liang, Y. Wang, R. Wang, R. Huang, and Y. Liu, ``AdapMoE: Adaptive Sensitivity-based Expert Gating and Management for Efficient MoE Inference,'' \textit{arXiv preprint arXiv:2408.10284}, 2024.

\bibitem{hobbit}
Y. Pu, J. Wu, Z. Zhang, and Y. Wang, ``A Mixed Precision Expert Offloading System for Fast MoE Inference,'' \textit{arXiv preprint arXiv:2411.01433}, 2024.

\bibitem{cachemoe}
A. Eliseev and D. Mazur, ``Mixture of Cache-Conditional Experts for Efficient Mobile Device Inference,'' \textit{arXiv preprint arXiv:2412.00099}, 2024.

\bibitem{google-switch}
W. Fedus, B. Zoph, and N. M. Shazeer, ``Switch transformers: Scaling to trillion parameter models with simple and efficient sparsity,'' arXiv preprint arXiv:2101.03961,2021.

\bibitem{li2023merge}
P. Li, Z. Zhang, P. Yadav, Y.-L. Sung, Y. Cheng, M. Bansal, and T. Chen, ``Merge, then compress: Demystify efficient SMoE with hints from its routing policy,'' arXiv
preprint arXiv:2310.01334, 2023.

\end{thebibliography}
\end{document}